\newtheorem{theorem}{Theorem}[section]
\newtheorem{lemma}[theorem]{Lemma}
\begin{document}

\title{\bf{Improvements in the Small Sample Efficiency of the Minimum $S$-Divergence Estimators under Discrete Models}\footnote{This is a part of the Ph.D. dissertation of the first author.}}
\author{Abhik Ghosh and Ayanendranath Basu\thanks{Corresponding author. Email: ayanbasu@isical.ac.in\vspace{2pt}}
	\\
	\vspace{6pt}  
{\em{Indian Statistical Institute, Kolkata, India}}
}
\maketitle

\begin{abstract}
This paper considers the problem of inliers and empty cells 
and the resulting issue of relative inefficiency in estimation 
under pure samples from a discrete population when the sample size is small. 
Many minimum divergence estimators in the $S$-divergence family,
although possessing very strong outlier stability properties,
often have very poor small sample efficiency in the presence of inliers
and some are not even defined in the presence of a single empty cell;
this limits the practical applicability of these estimators,
in spite of their otherwise sound robustness properties and high asymptotic efficiency.
Here, we will study a penalized version of the $S$-divergences 
such that the resulting minimum divergence estimators are free from these issues   
without altering their robustness properties and asymptotic efficiencies. 
We will give a general proof for the asymptotic properties of these minimum penalized $S$-divergence estimators.
This provides a significant addition to the literature as the asymptotics
of penalized divergences which are not finitely defined are currently unavailable in the literature. 
The small sample advantages of the minimum penalized $S$-divergence estimators 
are examined through an extensive simulation study 
and some empirical suggestions regarding the choice of the relevant underlying tuning parameters
are also provided.
\end{abstract}

\section{Introduction}

Minimum divergence inference provides an excellent theoretical alternative to the
classical maximum likelihood approach in presence of contamination in the observed data.
Many minimum divergence estimators are highly robust in the presence of outliers 
and have asymptotic efficiencies close to that of the maximum likelihood estimator under the 
pure model. Indeed, some minimum divergence estimators, along with their high robustness, 
provide full asymptotic efficiency under the true model (e.g., those based on the class of disparities). 
Although this is a very desirable large sample asymptotic property,
the results are not always so spectacular when applied to practical real-life data-sets of small sizes.  
Some of the robust minimum divergence estimators have very poor performance compared to 
the maximum likelihood estimators in small samples under pure data. Examples of such divergences include 
the celebrated Hellinger distance along with other Cressie-Read power divergences \citep{Cressie/Read:1984}
with large negative values of the tuning parameter.
The mean square error of these estimators at small sample sizes often turn out to be  substantially higher than that 
of the maximum likelihood estimator under pure model. 
This limits the use of such estimators in spite of their demonstrated strong robustness properties and good asymptotic performances.

 The issue of small sample efficiency of  the robust minimum divergence estimators
 has received some attention in the recent literature. The root of this problem appears to be 
 the presence of the so-called ``inliers" in the data. 
 Inliers are those values in the sample space where fewer observations are available compared to what is expected under the model. 
An empty cell is the most extreme case of an inlier. 
The inlier problem becomes more acute as the sample size becomes smaller. 
 Since most of the robust density based minimum divergence estimators successfully deal 
 with the outliers by down-weighting those observations by the model density, 
 they in turn magnify the effect of inliers. 
 Hence weights attached to the inliers or empty cells play a crucial role in the poor performance of the estimators
 in small samples. \cite{Lindsay:1994} observed this phenomenon in case of the popular Hellinger distance. 
 The problem of inliers can be further understood by noting the fact that 
 minimum divergence estimators with suitable treatments of inliers provide competitive 
 small sample performance compared to the maximum likelihood estimator under the true model. 
 Examples of such divergences include, among others, the Cressie-Read power divergence with positive values of the tuning parameter, 
 the negative exponential disparity \citep{Lindsay:1994, Basu/Sarkar/Vidyashankar:1997} and 
the generalizations \citep{Bhandari/Basu/Sarkar:2006} of the negative exponential disparity.

 Although the concept of inlier is relatively new compared to that of the outlier, 
 there has been a fair bit of recent activity leading to several methods for inlier correction, 
 without compromising the robustness properties of the corresponding minimum divergence estimators. 
 \cite{Basu/etc:2011} and \cite{Mandal/Basu:2010a} provide a comprehensive description of the concept and 
 relevant approaches to solve the problem of inliers. 
 Among all the available methods, in this paper, we will consider one particular technique based on
 the method of penalized divergences and use it to improve the minimum $S$-divergence estimators. 
 The $S$-divergence family has been developed in \citet{Ghosh/etc:2013a,Ghosh/etc:2013} and 
 generates many robust estimators without any significant loss in efficiency. 
 This large family includes the popular Cressie-Read power divergence and 
 the density power divergence \citep{Basu/etc:1998} measures as special cases. \cite{Ghosh:2013} and \cite{Ghosh/Basu:2015}
 have also derived the asymptotic distribution of the minimum $S$-divergence estimators 
 under discrete and continuous models, respectively. 
 However, just like many other density based divergences, 
 the $S$-divergences also use the model density to down-weight the outliers and 
 their small sample performance becomes worse in presence of inliers under the pure model 
 as will be demonstrated later in the paper. 
We will provide a modification to the minimum $S$-divergence estimators 
 using the concept of penalized divergences and prove their asymptotic equivalence to 
 the original minimum $S$-divergence estimator. 
 The corresponding estimator will also be robust under data contamination with improved efficiency in small samples. 
 
It is important to clearly spell out what is new in the present paper. 
\cite{Mandal/Basu:2010} established the asymptotic equivalence of the minimum divergence estimators 
corresponding to ordinary and penalized disparities. 
However, their proof was restricted to the cases where the ordinary divergences are finitely defined with probability one.
This excludes all divergences within the Cressie-Read family of disparities 
for which the tuning parameter $\lambda\leq -1$ (as well as many other disparities outside the Cressie-Read family).
Thus, although \cite{Mandal/Basu:2010} could define penalized versions of disparities 
like, say,  the Kullback-Leibler divergence ($\lambda=-1$) or the Neyman's chi-square ($\lambda=-2$), 
they did not have a proof of the asymptotic normality of the corresponding minimum divergence estimator.
The approach of our proof transcends this limitation.
We will present a general proof applicable to all the divergences within the $S$-divergence family.
Our proof can be easily generalized to all disparities, and also accommodates the class of density power divergences.
Thus, not only we allow the controlling of all disparities, including those which are not finitely defined,
we also add another dimension to this exercise by including the divergences within the $S$-divergence family, 
and in particular the members of the density power divergence family.
We will, however, restrict our attention to discrete models throughout the paper, 
as this is the case where the empty cells are more relevant.

Another major contribution of the present paper is to study the small sample behaviors 
of different minimum $S$-divergence estimators and their penalized versions, to be introduced here,
through extensive simulations under the Poisson model.     
The study of the MSDEs in small samples indicates the necessity of inlier correction 
for many robust members within the $S$-divergence family. 
As a solution, we then consider a penalized version of the $S$-divergence measure
and empirically illustrate their small sample superiority in inlier control.
Indeed, for this purpose, we define the penalized $S$-divergence by replacing 
the weights attached to the empty cells in the $S$-divergence by a suitably chosen penalty factor.
The choice of this penalty factor becomes crucial for the improvement of their  small sample efficiency. 
A large scale simulation exercise studies this problem in great detail and attempts 
to find out the optimum value of the penalty factor separately for each member of the divergence family
over different (small) sample sizes and different model parameters.
Some overall practical suggestions and guidelines are also provided for practitioners through proper empirical evidences.
Another possible intuitive extension of the penalty scheme is also proposed at the end of the paper with some brief suggestions.  
A real data illustration is also provided.
 
The rest of the paper is organized as follows. We begin with a brief description of the 
minimum $S$-divergence estimators in Section \ref{SEC:MSDE} and 
show how the members of the $S$-divergence family are 
affected by the inliers in terms of their small sample efficiency compared to the maximum
likelihood estimator. Then we introduce the concept of ``penalized $S$-divergence" and the 
corresponding minimum divergence estimators in Section \ref{SEC:5MPSDE} and 
prove their asymptotic equivalence to the original minimum $S$-divergence estimator in Section \ref{SEC:5MPSDE_asymp_discrete}. 
We illustrate the performance of the minimum penalized $S$-divergence estimators in Section \ref{SEC:choice_h}
through an extensive simulation study, where we suggest suitable optimum choices 
of the penalty factor for practical application of different penalized estimators at small sample sizes.   
A real data example is considered in Section \ref{SEC:examples}.
Finally we end the paper with some conclusions, recommendations and discussions on possible future extensions 
in Section \ref{SEC:discussion}.


\section{The Minimum $S$-Divergence Estimators (MSDE) under Discrete Models and its Small Sample Efficiency}
\label{SEC:MSDE}

	The $S$-Divergence family has been defined as a general family of divergence measures
	including the famous Cressie-Read power divergence family and the density power divergence family as 
	its subclasses \citep{Ghosh/etc:2013a,Ghosh/etc:2013}. It is defined in terms of two parameters 
	$\alpha\geq 0$ and $\lambda\in \mathbb{R}$ as
\begin{eqnarray}
		S_{(\alpha, \lambda)}(g,f) &=&  \frac{1}{A} ~ \int ~ f^{1+\alpha}  -   \frac{1+\alpha}{A B} ~ \int ~~ f^{B} g^{A}  + \frac{1}{B} ~ \int ~~ g^{1+\alpha}, 
 \label{EQ:S_div}
\end{eqnarray}
where $A = 1+\lambda (1-\alpha)$ and $B = \alpha - \lambda (1-\alpha)$.
For $A=0$ the $S$-divergence measures may be re-defined by its continuous limit as $A \rightarrow 0$ 
so that
\begin{eqnarray}
    S_{(\alpha,\lambda : A = 0)}(g,f) &=& \lim_{A \rightarrow 0} ~ S_{(\alpha, \lambda)}(g,f) 
    =  \int f^{1+\alpha} \log\left(\frac{f}{g}\right) - \int \frac{(f^{1+\alpha} - g^{1+\alpha})}{{1+\alpha}}.
\label{EQ:S_div_A0}
\end{eqnarray}
Similarly, for $B=0$, we have
\begin{eqnarray}
    S_{(\alpha,\lambda : B = 0)}(g,f) = \lim_{B \rightarrow 0} ~ S_{(\alpha, \lambda)}(g,f) 
    =  \int g^{1+\alpha} \log\left(\frac{g}{f}\right) - \int \frac{(g^{1+\alpha} - f^{1+\alpha})}{{1+\alpha}}.
    \label{EQ:S_div_B0}
\end{eqnarray}
	Note that at $\alpha = 0$ , the $S$-divergence family reduces to the Cressie-Read family 
	having parameter  $\lambda$ and at $\alpha =1$, it  becomes independent of $\lambda$ coinciding 
	with the $L_2$ divergence. On the other hand, at $\lambda = 0$, it generates the density dower divergences with  parameter $\alpha$.  The members of the $S$-divergence family 
	are indeed genuine statistical divergence measures provided $\lambda \in \mathbb{R}$, and 
	$\alpha \geq 0$.


We will consider the set-up for parametric estimation with discrete model families. 
	We have $n$ independent and identically distributed observations  $X_1$, $\cdots$, $X_n$ from the true
	 distribution $G$ having probability mass function (pmf) $g$. 
	 Without loss of generality, the support of $g$ is assumed to be $\chi = \{0, 1, 2, \cdots \}$. 
	 We want to model it by a parametric family of model pmf
	 $\mathcal{F}=\{f_\theta : \theta \in \Theta \subseteq \mathbb{R}^p\}$.
	 Then the $S$-divergence measure between the data and the model is defined through the relative
	  frequency vector $\mathbf{r}_n = ( r_n(0), r_n(1), \cdots )^T$ and the model probability vector
	  $\mathbf{f}_\theta = ( f_\theta(0), f_\theta(1), \cdots )^T$; here for any $x\in \chi$, we define
	  $r_n(x) = \frac{1}{n} \sum_{i=1}^n I(X_i = x)$ with $I(E)$ being the indicator function of the event $E$. 
	  The minimum $S$-divergence estimator  is the parameter value which minimizes 
	  the $S$-divergence measure between the data $\mathbf{r}_n$ and the model $\mathbf{f}_\theta$. 
	  Hence, the estimating equation for the minimum $S$-divergence estimator is given by
\begin{eqnarray}
    \sum_{x=0}^\infty f_{\theta}^{1+\alpha}(x) u_{\theta}(x) - \sum_{x=0}^\infty f_{\theta}^{B}(x) r_n^{A}(x) u_{\theta}(x) &=& 0, \\
\mbox{or,   } ~~ \sum_{x=0}^\infty K(\delta(x))f_{\theta}^{1+\alpha}(x) u_{\theta}(x) = 0,
\label{EQ:S-est_eqn_discrete}
\end{eqnarray}
where $\delta(x)= \delta_n(x) = \frac{r_n(x)}{f_{\theta}(x)} - 1$, $ K(\delta) = \frac{(\delta+1)^A - 1}{A}$ and $u_\theta(x)=\nabla \ln f_\theta(x)$ is the likelihood score function. 
Here,  $\nabla = (\nabla_1, \ldots, \nabla_p)^T$ denotes the derivative with respect to 
$\theta=(\theta_1, \ldots, \theta_p)^T$. 
See \cite{Ghosh:2013}  and \cite{Ghosh/etc:2013} for detailed properties of the 
minimum $S$-divergence estimator under discrete models, including their asymptotic distribution and 
the influence function.

\begin{figure}[!th]
\centering
\subfloat[$\theta=3$, $\lambda=0$]{
\includegraphics[width=0.3\textwidth]{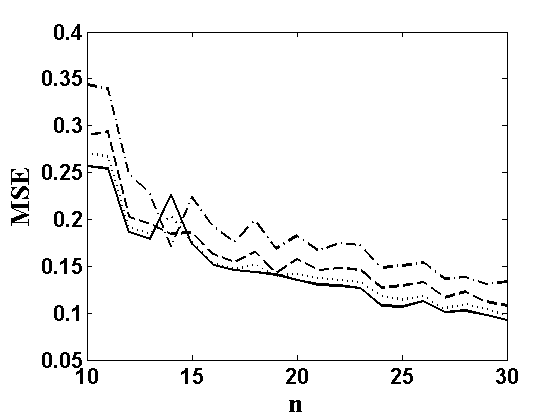}
\label{fig:MSDE_Poiss3_L0}}
~ 
\subfloat[$\theta=3$, $\lambda=-0.5$]{
\includegraphics[width=0.3\textwidth]{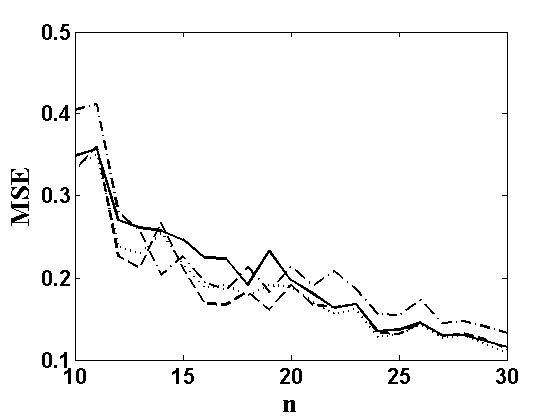}
\label{fig:MSDE_Poiss3_L-5}}
~ 
\subfloat[$\theta=3$, $\lambda=-1$]{
\includegraphics[width=0.3\textwidth]{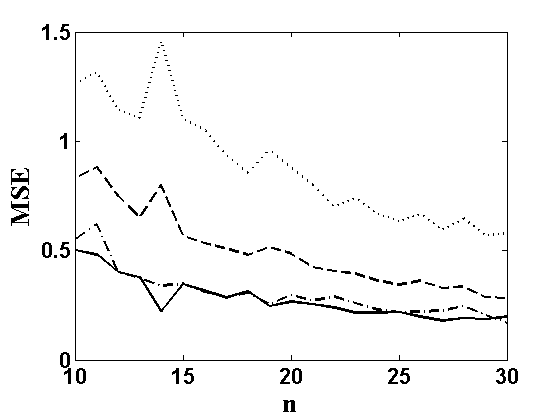}
\label{fig:MSDE_Poiss3_L-10}}
\\ 
\subfloat[$\theta=5$, $\lambda=0$]{
\includegraphics[width=0.3\textwidth]{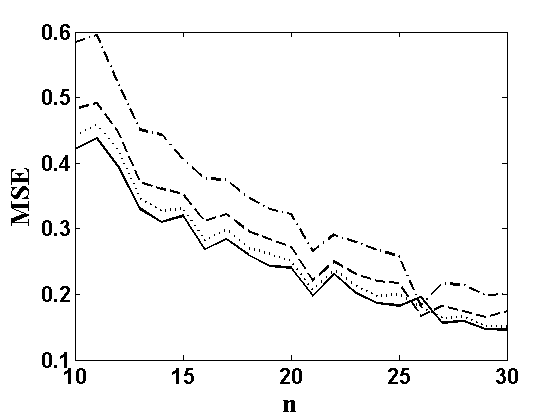}
\label{fig:MSDE_Poiss5_L0}}
~ 
\subfloat[$\theta=5$, $\lambda=-0.5$]{
\includegraphics[width=0.3\textwidth]{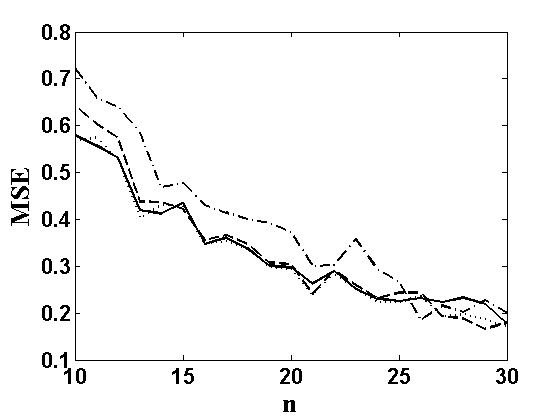}
\label{fig:MSDE_Poiss5_L-5}}
~ 
\subfloat[$\theta=5$, $\lambda=-1$]{
\includegraphics[width=0.3\textwidth]{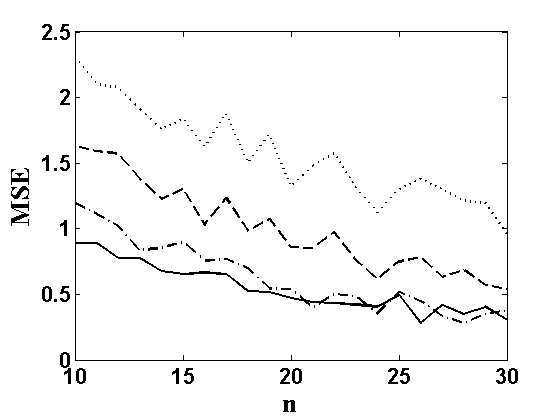}
\label{fig:MSDE_Poiss5_L-10}}
\\ 
\subfloat[$\theta=8$, $\lambda=0$]{
\includegraphics[width=0.3\textwidth]{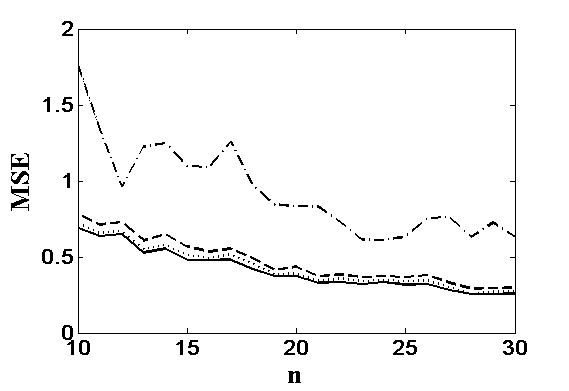}
\label{fig:MSDE_Poiss8_L0}}
~ 
\subfloat[$\theta=8$, $\lambda=-0.5$]{
\includegraphics[width=0.3\textwidth]{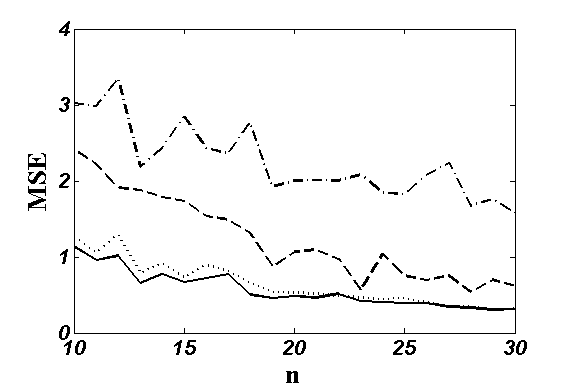}
\label{fig:MSDE_Poiss8_L-5}}
~ 
\subfloat[$\theta=8$, $\lambda=-1$]{
\includegraphics[width=0.3\textwidth]{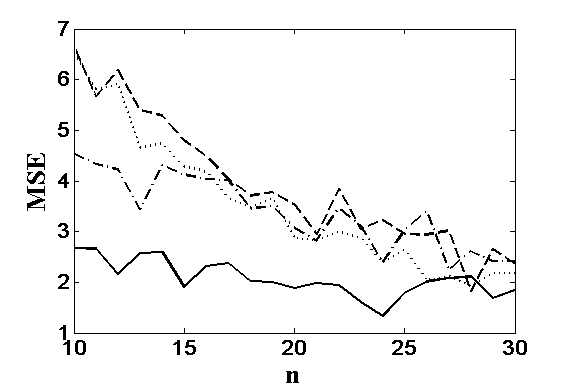}
\label{fig:MSDE_Poiss8_L-10}}
\caption[MSE of $\sqrt{n}\widehat{\theta}_{\alpha, \lambda}$ over sample size $n$ for different $\lambda$, $\alpha$ and $\theta$]{MSE of $\sqrt{n}\widehat{\theta}_{\alpha, \lambda}$ over sample size $n$ for different $\lambda$, $\alpha$ and $\theta$ [Dotted line: $\alpha=0.1$; Dashed line: $\alpha=0.25$; Dot-dashed line: $\alpha=0.5$; Solid line: 
MLE ($\lambda = 0$, $\alpha = 0$) in the first column; minimum Hellinger distance estimator ($\lambda = -0.5$, $\alpha = 0$) in the second column; minimum $L_2$ distance estimator ($\alpha = 1$) in the third column. ] }
\label{FIG:5MSE_MSDE}
\end{figure}

Let us now examine the small sample performance of the minimum $S$-divergence estimators (MSDE), say 
$\widehat{\theta}_{\alpha, \lambda}$, for different values of tuning parameters $\alpha$ and $\lambda$. 
We consider the discrete Poisson model with mean $\theta$ and perform a simulation study 
to examine empirical MSE of the MSDEs under several small sample sizes.
Figure \ref{FIG:5MSE_MSDE} below shows the MSE of $\sqrt{n}\widehat{\theta}_{\alpha, \lambda}$ 
(=$n\times $MSE of $\widehat{\theta}_{\alpha, \lambda}$) over sample size $n$ 
for different values of the Poisson parameter $\theta$. 
Note that, according to the asymptotic theory of MSDEs, this MSE of the $\sqrt{n}$ times MSDEs 
converges to a constant limit depending only on the tuning parameter $\alpha$ under the (pure) model
and increases as $\alpha$ increases.  

However, interestingly, it is observed from Figure \ref{FIG:5MSE_MSDE} that the MSE of 
$\sqrt{n}\widehat{\theta}_{\alpha, \lambda}$ increases significantly as the sample size $n$ decreases
for $\lambda=-0.5, -1$ and smaller $\alpha$; 
it implies that the MSDEs corresponding to those tuning parameters 
(negative $\lambda$ and smaller $\alpha$) become unstable at the small sample size.
The main reason behind this instability in fact comes from the existence of inliers and 
empty cells in the sample with smaller sizes; this fact is also justified by the fact that 
the instability of those MSDEs increases for large values of $\theta$ 
where the chances of an inlier or empty cell is higher.
However, it is already observed in \cite{Ghosh/etc:2013} that these MSDEs
with negative $\lambda$ are highly robust in presence of outliers and 
they cannot be ignored due to their strong robustness and good asymptotic properties; 
but their application to small sample becomes restricted due to its inlier problem.
Thus, we need to have a small sample correction on the MSDEs to control the inliers 
keeping all the good robustness and asymptotic properties as it is.

Further, all the $S$-divergence measures with $\lambda<-1$, any $0\leq \alpha<1$ and $\lambda=-1, \alpha=0$ 
are in fact becomes undefined if there is an empty cell in the sample from a discrete population. 
This can be seen by looking at the explicit form of the corresponding $S$-divergence given by 
\begin{eqnarray}
S_{(\alpha, \lambda)}(\mathbf{r}_n,\mathbf{f}_\theta)  
=  \frac{1}{A} ~ \sum_{x=0}^\infty f_{\theta}^{1+\alpha}(x) 
- \frac{1+\alpha}{A B} ~\sum_{x=0}^\infty f_{\theta}^{B}(x) r_n^{A}(x) 
+ \frac{1}{B} ~\sum_{x=0}^\infty r_n^{1+\alpha}(x),
\label{EQ:5S-div_discrete}
\end{eqnarray}
where $A$ and $B$ are non-zero. 
Now, whenever $A<0$ (i.e., $\lambda< -1/(1-\alpha)$, see Table \ref{TAB:A}), then the term containing $r_n(x)^A$ 
(that also contains $f_\theta$ and so cannot be neglected in the objective function or the estimating equation) 
becomes undefined at those points $x$ in the sample space for which $r_n(x)=0$ (empty cells) 
and hence the corresponding divergence also becomes undefined in the presence of even a single empty cell. 
Thus using such divergences for the derivation of the minimum divergence estimator becomes a fruitless exercise. 
The same can be observed for the case $A=0$ also,
since then the $S$-divergence measure contains a term involving $f_\theta$ and $\log(r_n)$ 
which becomes undefined at $r_n(x)=0$. All these motivate for
a suitable inlier correction in the minimum $S$-divergence estimator.

\section{The Penalized $S$-Divergence (PSD) and Minimum Divergence Estimation}
\label{SEC:5MPSDE}

The concept of penalized divergence was used in the context of successful inlier correction 
by \cite{Mandal/Basu:2010} and \cite{Mandal/Basu:2010a} to modify the Cressie-Read power divergence family, 
a particular subfamily of the $S$-divergences. 
We will now extend it in the case of general family of $S$-divergence and 
examine its performance with respect to efficiency, robustness and inlier controls.

Consider the set-up of discrete parametric model as described in previous section. 
Then the $S$-divergence measure between the data and the model for $A\ne 0$ and $B\ne 0$ 
is given by (\ref{EQ:5S-div_discrete}), which can be further re-written as 
\begin{eqnarray}
S_{(\alpha, \lambda)}(\mathbf{r}_n,\mathbf{f}_\theta)   &=& ~ \sum_{x:r_n(x)\ne 0}~
\left[\frac{1}{A}f_\theta^{1+\alpha}(x)-\frac{1+\alpha}{AB}~f_\theta^{B}(x)r_n^{A}(x)
+\frac{1}{B}~r_n^{1+\alpha}(x)\right] 
\nonumber\\ && + \sum_{x:r_n(x)=0}~ 
\left[\frac{1}{A}f_\theta^{1+\alpha}(x)-\frac{1+\alpha}{AB}~f_\theta^{B}(x)r_n^{A}(x)
+\frac{1}{B}~r_n^{1+\alpha}(x)\right].
\end{eqnarray}
Note that the first term is always defined, 
whereas the second term is not defined at negative values of $A$.
For the positive values of $A$, the second term further simplifies to 
$\frac{1}{A} ~ \sum_{x:r_n(x)=0} f_\theta^{1+\alpha}(x)$ which can also be defined with $A<0$.
Thus, motivating from this fact and retaining similar consistency in the expression of 
the divergence, we define the penalized version of the $S$-divergence under discrete models as
\begin{eqnarray}
PSD_{(\alpha, \lambda)}^h(\mathbf{r}_n,\mathbf{f}_\theta) &=& ~ \sum_{x:r_n(x)\ne 0} ~ 
\left[ \frac{1}{A} f_\theta^{1+\alpha}(x)  -   \frac{1+\alpha}{A B} ~ f_\theta^{B}(x) r_n^{A}(x)  
+ \frac{1}{B} ~ r_n^{1+\alpha}(x) \right] \nonumber\\
&& ~~  +  h \sum_{x:r_n(x)=0} f_\theta^{1+\alpha}(x), 
\label{EQ:PSD}
\end{eqnarray}
where $h$ can be thought of as a penalty factor. 
In the absence of any empty cell, the penalized $S$-divergence (PSD) coincides with the ordinary $S$-divergence;
also in the presence of empty cells it coincides with the ordinary $S$-divergence 
only for the particular choice $h=\frac{1}{A}$ provided $A>0$ (see Table \ref{TAB:A}).
Thus the PSD modification makes  the $S$-divergence finitely defined for all $A\in \mathbb{R}$
and adjusts  the weights of empty cells by the factor $h$ for the cases $A>0$.
Note that the function $PSD_{(\alpha, \lambda)}^h(\mathbf{r}_n,\mathbf{f}_\theta)$  as defined above 
is a genuine statistical divergence for all $(\alpha,\lambda)$ with $A\ne 0$ and $B\ne 0$ and $h \geq 0$.

\begin{table}[!th]
\centering 
\caption{Values of $A$ (and empty cell weight $\frac{1}{A}$ in parenthesis) for different $\alpha$ and $\lambda$ }
\resizebox{0.9\textwidth}{!}{
\begin{tabular}{r|rrrrrrr} \hline
$\lambda$ & \multicolumn{7}{|c}{$\alpha$}\\
	&	0				&	0.1				&	0.25				&	0.4				&	0.5				&	0.7				&	1				\\\hline
$-$2	&	$-$1	(	$-$1.00	)	&	$-$0.80	(	$-$1.25	)	&	$-$0.50	(	$-$2.00	)	&	$-$0.20	(	$-$5.00	)	&	0.00	(	Inf	)	&	0.40	(	2.50	)	&	1	(	1	)	\\
$-$1.5	&	$-$0.5	(	$-$2.00	)	&	$-$0.35	(	$-$2.86	)	&	$-$0.13	(	$-$8.00	)	&	0.10	(	10.00	)	&	0.25	(	4.00	)	&	0.55	(	1.82	)	&	1	(	1	)	\\
$-$1	&	0	(	Inf	)	&	0.10	(	10.00	)	&	0.25	(	4.00	)	&	0.40	(	2.50	)	&	0.50	(	2.00	)	&	0.70	(	1.43	)	&	1	(	1	)	\\
$-$0.5	&	0.5	(	2.00	)	&	0.55	(	1.82	)	&	0.63	(	1.60	)	&	0.70	(	1.43	)	&	0.75	(	1.33	)	&	0.85	(	1.18	)	&	1	(	1	)	\\
$-$0.1	&	0.9	(	1.11	)	&	0.91	(	1.10	)	&	0.93	(	1.08	)	&	0.94	(	1.06	)	&	0.95	(	1.05	)	&	0.97	(	1.03	)	&	1	(	1	)	\\
0	&	1	(	1.00	)	&	1.00	(	1.00	)	&	1.00	(	1.00	)	&	1.00	(	1.00	)	&	1.00	(	1.00	)	&	1.00	(	1.00	)	&	1	(	1	)	\\
0.1	&	1.1	(	0.91	)	&	1.09	(	0.92	)	&	1.08	(	0.93	)	&	1.06	(	0.94	)	&	1.05	(	0.95	)	&	1.03	(	0.97	)	&	1	(	1	)	\\
0.5	&	1.5	(	0.67	)	&	1.45	(	0.69	)	&	1.38	(	0.73	)	&	1.30	(	0.77	)	&	1.25	(	0.80	)	&	1.15	(	0.87	)	&	1	(	1	)	\\
1	&	2	(	0.50	)	&	1.90	(	0.53	)	&	1.75	(	0.57	)	&	1.60	(	0.63	)	&	1.50	(	0.67	)	&	1.30	(	0.77	)	&	1	(	1	)	\\
\hline
	\end{tabular}}
	\label{TAB:A}
\end{table}

For a discrete model family as considered above, the minimum PSD estimating equation is then given by
\begin{eqnarray}
\sum_{x:r_n(x)\ne 0} \left[ f_{\theta}^{1+\alpha}(x) -  f_{\theta}^{B}(x) r_n^{A}(x) \right] u_{\theta}(x)
+  h A \sum_{x:r_n(x)= 0} f_{\theta}^{1+\alpha}(x) u_{\theta}(x) &=& 0 \nonumber \\
\mbox{or,   } ~~ \sum K_h(\delta(x))f_{\theta}^{1+\alpha}(x) u_{\theta}(x) = 0,
\label{EQ:5MPSDE_est_eqn} 
\end{eqnarray}
where $\delta(x)= \delta_n(x) = \frac{r_n(x)}{f_{\theta}(x)}-1$ and 
\begin{eqnarray}
K_h(\delta) &=& \left\{\begin{array}{lr}
\frac{(\delta+1)^A - 1}{A} & \mbox{ if } \delta \ne -1, \\
- h & \mbox{ if } \delta = -1.
\end{array}\right.
\end{eqnarray}
Note that the estimating equation for the minimum penalized $S$-divergence estimator 
has the same form as that of the $S$-divergence case \citep{Ghosh/etc:2013}, except that the continuous $K(\delta)$ 
has been now transformed to $K_h(\delta)$ that is discontinuous at the lower end-point $\delta=-1$. 
However, due to this different structure of the functions $K(\delta)$ and $K_h(\delta)$, 
the asymptotic properties of the minimum penalized $S$-divergence estimators (MPSDE) 
cannot be obtained directly from that of the MSDEs.
We will rigorously derive the asymptotics of all MPSDEs in the next section.

\section{Asymptotic Properties of the MPSDE under Discrete Models}
\label{SEC:5MPSDE_asymp_discrete}

Consider the set-up of discrete models as described above.
Note that, the estimating equations of MPSDE and MSDE only differ in terms of the functions
$K(\delta)$ and $K_h(\delta)$. 
We will follow the approach of \cite{Ghosh:2013} in establishing 
the asymptotic properties of the minimum divergence estimators, 
while clearly indicating the required modifications needed to extend the proof to the case of the penalized version.
Note that, intuitively this difference between the two estimating equations vanishes asymptotically 
under the true model, because the set of possible points $x$ with $r_n(x)=0$ should converge 
to a null set under the true distribution. Hence the asymptotic distribution of the
MPSDEs may intuitively be expected to be the same as that of the MSDES 
and there is no loss of asymptotic efficiency 
in using the penalized $S$-divergence over $S$-divergence.
The theorem below presents the same with more concrete proof, 
which extends the proof for the ordinary $S$-divergences 
\citep[Theorem 1,][]{Ghosh:2013} to this present case of penalized $S$-divergences.

Let us start with some useful Lemmas.
Along with the notations of \cite{Ghosh:2013}, 
consider the definitions $a_n(x) = K(\delta_n(x)) - K(\delta_g(x))$ and 
$b_n(x) = (\delta_n(x)-\delta_g(x))K'(\delta_g(x))$,
where $\delta_g(x) = \frac{g(x)}{f_\theta(x)}-1$.
Further assume that Conditions (SA1)--(SA7) of \cite{Ghosh:2013} hold. 
Then the following two lemmas help us to obtain the asymptotic distribution of \\ \begin{center}
$S_{1n}^* = \sqrt n \displaystyle\sum_{x:r_n(x)\neq 0} a_n(x)f_{\theta}^{1+\alpha}(x)u_{\theta}(x)$ 
and $S_{2n}^* = \sqrt n \displaystyle\sum_{x:r_n(x) \neq 0} b_n(x)f_{\theta}^{1+\alpha}(x)u_{\theta}(x)$.
\end{center}

\begin{lemma}
	Assume that Condition (SA5) holds. Then $E|S_{1n}^*-S_{2n}^*| \rightarrow 0 $ as $n \rightarrow \infty$, \\
and hence $S_{1n}-S_{2n} \displaystyle\mathop{\rightarrow}^\mathcal{P} 0$ as $n \rightarrow \infty$.
\label{LEM:5S1n_S2n_equiv}
\end{lemma}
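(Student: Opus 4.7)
The plan is to reduce the first claim to a routine variance calculation via a second-order Taylor expansion of $K$, and then obtain convergence in probability through Markov's inequality. The restriction to cells with $r_n(x)\ne 0$ in the starred sums is essential: it keeps $\delta_n(x)=r_n(x)/f_\theta(x)-1$ strictly greater than $-1$, so $K$ and its derivatives remain well defined on the summation range (crucially, even in the regimes where $A\le 0$ and the unpenalized $K(\delta)$ blows up at $\delta=-1$). This is exactly the feature that makes the penalized analysis go through where the original $S$-divergence argument cannot.

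Concretely, at each $x$ with $r_n(x)\ne 0$, a second-order Taylor expansion of $K$ around $\delta_g(x)$ gives
$$a_n(x)-b_n(x)=\tfrac{1}{2}\bigl(\delta_n(x)-\delta_g(x)\bigr)^{2}\, K''(\xi_n(x)),$$
where $\xi_n(x)$ lies between $\delta_n(x)$ and $\delta_g(x)$. Condition (SA5) provides the domination needed to control $|K''(\xi_n(x))|\,f_\theta^{1+\alpha}(x)\,|u_\theta(x)|$ by a summable envelope. Taking expectations and using the multinomial variance identity
$E\bigl(\delta_n(x)-\delta_g(x)\bigr)^{2} = g(x)(1-g(x))/[n f_\theta^{2}(x)] \le g(x)/[n f_\theta^{2}(x)]$, together with the triangle inequality, yields a bound of the form
$$E|S_{1n}^{*}-S_{2n}^{*}| \le \frac{C}{2\sqrt{n}} \sum_{x} g(x)\, f_\theta^{\alpha-1}(x)\, |u_\theta(x)|,$$
whose right-hand side decays at rate $O(n^{-1/2})$ once the sum is shown to be finite, which is precisely what (SA5) ensures. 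This establishes $E|S_{1n}^{*}-S_{2n}^{*}|\to 0$. Convergence in probability then follows from Markov's inequality; for the unstarred statement, one additionally notes that for each fixed $x$ in the support $P(r_n(x)=0)=(1-g(x))^{n}\to 0$, so the empty-cell remainders $S_{jn}-S_{jn}^{*}$ vanish in probability by dominated convergence against the envelope from (SA5).

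The main obstacle I expect is uniform control of $K''(\xi_n(x))$. Since $K''(\delta)=(A-1)(\delta+1)^{A-2}$, this second derivative diverges as $\delta\downarrow -1$ whenever $A<2$, and $\xi_n(x)$ can be forced arbitrarily close to $-1$ on cells where $r_n(x)$ registers only a handful of observations. The bound therefore has to be split into two regimes: on cells where $\delta_n(x)$ is separated from $-1$ the Taylor estimate above applies directly, while on the residual cells one falls back on the crude lower bound $r_n(x)\ge 1/n$ and uses the model-side assumptions bundled into (SA5) to show that their contribution is still $o(1)$ after multiplication by the $\sqrt{n}$ prefactor. Balancing this residual term against that prefactor is the technical heart of the argument and the precise place at which the penalized proof departs from the original argument in Ghosh (2013); once this is in place, the remaining steps mirror the standard multinomial variance computation used there.
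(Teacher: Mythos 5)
Your overall architecture (Taylor expansion of $K$, a multinomial variance computation, Markov's inequality) is the right family of ideas, but the specific route you take has two gaps that the paper's argument is designed to avoid. First, as you yourself note, the global second-order bound $|a_n(x)-b_n(x)|\le \tfrac{1}{2}(\delta_n(x)-\delta_g(x))^2\,|K''(\xi_n(x))|$ requires a uniform bound on $K''(\delta)=(A-1)(\delta+1)^{A-2}$, which does not exist near $\delta=-1$ for any $A<2$ --- and $A<2$ covers essentially every case of interest here (Hellinger distance has $A=1/2$, and the whole motivation of the paper is the regime $A\le 0$). You flag this and propose a two-regime split using $r_n(x)\ge 1/n$, but that split is precisely the hard technical step and it is not carried out; the claim that the residual cells contribute $o(1)$ after the $\sqrt{n}$ prefactor is asserted, not proved. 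Second, and independently of the $K''$ issue, your final envelope is wrong for the stated hypotheses: $E(\delta_n-\delta_g)^2\le g(x)/[nf_\theta^2(x)]$ leads to the bound $\tfrac{C}{2\sqrt{n}}\sum_x g(x)f_\theta^{\alpha-1}(x)|u_\theta(x)|$, and the finiteness of $\sum_x g\,f_\theta^{\alpha-1}|u_\theta|$ is \emph{not} what (SA5) delivers (the paper uses (SA5) to control $\sum_x g^{1/2}f_\theta^{\alpha}|u_\theta|$). These are genuinely different conditions: at the Poisson model with $g=f_\theta$ and $\alpha=0$ your sum is $\sum_x|x/\theta-1|=\infty$, so your bound is vacuous in exactly the benchmark case, while the paper's envelope $\sum_x f_\theta^{1/2}|u_\theta|$ converges.

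The paper sidesteps both problems by not asking for a rate. Following Lemma 3 of Ghosh (2015), it uses only the \emph{first-order} Lipschitz-type bound $|a_n(x)-b_n(x)|\le 2\sup|K'|\cdot|\delta_n(x)-\delta_g(x)|$, with $|K'|$ bounded via (SA7) and the SLLN (Equation (\ref{EQ:K1_bound})); combined with $E|\delta_n(x)-\delta_g(x)|\le g^{1/2}(x)/[\sqrt{n}f_\theta(x)]$ this produces the $n$-free, summable dominating envelope $\beta\,g^{1/2}(x)f_\theta^{\alpha}(x)|u_\theta(x)|$. The second-order (quadratic) behaviour of $a_n-b_n$ is then invoked only \emph{pointwise in $x$}, where $\delta_n(x)\to\delta_g(x)>-1$ almost surely keeps the intermediate point away from $-1$, to show that each fixed term of $\sqrt{n}\,E|a_n(x)-b_n(x)|f_\theta^{1+\alpha}(x)|u_\theta(x)|$ tends to zero; the dominated convergence theorem then upgrades this to convergence of the whole sum, and Markov's inequality finishes. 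You would need to restructure your argument along these lines (domination by a first-order bound, convergence term by term, DCT) rather than seeking a uniform $O(n^{-1/2})$ rate, which is both unattainable under (SA5) alone and unnecessary for the conclusion.
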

\begin{proof}
Following the same line of the proof of the Lemma 3 in \cite{Ghosh:2013}, we get
\begin{eqnarray}
E|S_{1n}^*-S_{2n}^*| 
&\le&  \beta \sum_{x:r_n(x) \neq 0} g^{1/2}(x) f_{\theta}^{\alpha}(x)|u_{\theta}(x)| \nonumber \\
&\le&  \beta \sum_{x} g^{1/2}(x) f_{\theta}^{\alpha}(x)|u_{\theta}(x)| \nonumber \\
& < &   \infty, ~~~~~~ \mbox{ (by assumption (SA5))}.\nonumber
\end{eqnarray}
Then the proof follows using the dominated convergence theorem (DCT) and  Markov inequality.
\end{proof}

\begin{lemma}
Suppose the matrix $V_g$, as defined in Lemma 4 of \cite{Ghosh:2013}, is finite. Then 
$$
S_{1n}^* \mathop{\rightarrow}^\mathcal{D} N(0, V_g).
$$
\label{LEM:5S1n_distr}
\end{lemma}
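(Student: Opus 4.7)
The plan is to reduce the statement to the asymptotic distribution of $S_{2n}^*$ via Lemma~\ref{LEM:5S1n_S2n_equiv}, then bridge from $S_{2n}^*$ to a natural ``full-range'' analogue summing over all support points, which can be recast as a centred i.i.d.\ average and handled by the classical multivariate central limit theorem. Since Lemma~\ref{LEM:5S1n_S2n_equiv} already gives $S_{1n}^* - S_{2n}^* = o_P(1)$, Slutsky's theorem reduces the task to proving $S_{2n}^* \to N(0, V_g)$ in distribution. Using $\delta_n(x) - \delta_g(x) = (r_n(x) - g(x))/f_\theta(x)$, I would first rewrite
\[
S_{2n}^* = \sqrt{n} \sum_{x: r_n(x) \neq 0} (r_n(x) - g(x)) \, K'(\delta_g(x)) \, f_\theta^{\alpha}(x) \, u_\theta(x).
\]

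Next I would introduce the unrestricted analogue
\[
T_n = \sqrt{n} \sum_{x} (r_n(x) - g(x)) \, K'(\delta_g(x)) \, f_\theta^{\alpha}(x) \, u_\theta(x),
\]
and observe that $T_n - S_{2n}^* = -\sqrt{n} \sum_{x: r_n(x) = 0} g(x) K'(\delta_g(x)) f_\theta^{\alpha}(x) u_\theta(x)$, so componentwise
\[
E \bigl| T_n - S_{2n}^* \bigr| \leq \sum_{x} \sqrt{n} \, (1-g(x))^n \, g(x) \, |K'(\delta_g(x))| \, f_\theta^{\alpha}(x) \, |u_\theta(x)|,
\]
using $P(r_n(x) = 0) = (1-g(x))^n$. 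Each summand vanishes pointwise; to invoke DCT I would use the uniform-in-$n$ bound $\sup_{n \geq 1} \sqrt{n} \, p (1-p)^n \leq c \sqrt{p}$ (obtained by optimising the left side in $n$), so that every summand is dominated by $c \sqrt{g(x)} |K'(\delta_g(x))| f_\theta^{\alpha}(x) |u_\theta(x)|$, whose sum is finite under the assumed conditions (SA1)--(SA7). Markov's inequality then delivers $T_n - S_{2n}^* = o_P(1)$.

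Once the restriction to non-empty cells has been removed, writing $r_n(x) = n^{-1} \sum_{i=1}^n I(X_i = x)$ turns $T_n$ into a centred i.i.d.\ average
\[
T_n = \frac{1}{\sqrt n} \sum_{i=1}^n \bigl\{ \psi(X_i) - E_g \psi(X) \bigr\}, \qquad \psi(x) = K'(\delta_g(x)) f_\theta^{\alpha}(x) u_\theta(x),
\]
and the multivariate CLT yields $T_n \to N(0, \mathrm{Var}_g[\psi(X)])$; a routine computation identical to that in Lemma~4 of \cite{Ghosh:2013} confirms that the resulting covariance matches the $V_g$ there. Combining the three reductions by Slutsky completes the proof.

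The main obstacle, and the only essentially new ingredient compared with the unpenalised argument, is the dominated-convergence step controlling the empty-cell deficit $T_n - S_{2n}^*$. The uniform bound on $\sqrt{n} \, p(1-p)^n$ is what turns the pointwise exponential decay of $P(r_n(x) = 0)$ into a genuinely summable dominating function; this is the only place where the defining restriction of the penalised divergence (the sum over $r_n(x) \neq 0$) actually intervenes in the proof, so everything downstream is a verbatim repetition of the original MSDE argument.
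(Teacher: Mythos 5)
Your proposal is correct and follows essentially the same route as the paper: decompose $S_{2n}^*$ into the unrestricted centred i.i.d.\ sum (your $T_n$) plus the empty-cell remainder $\sqrt{n}\sum_{x:r_n(x)=0} g(x)K'(\delta_g(x))f_\theta^{\alpha}(x)u_\theta(x)$, kill the remainder in $L_1$ via $P(r_n(x)=0)=(1-g(x))^n$ together with the uniform bound $\sqrt{n}\,g(x)(1-g(x))^n \lesssim \sqrt{g(x)}$ and dominated convergence under (SA5)/(SA7), then apply the multivariate CLT to $T_n$ and Lemma~\ref{LEM:5S1n_S2n_equiv} with Slutsky. The paper's proof uses exactly this splitting $g=g^{1/2}\cdot g^{1/2}$ and the bound $\sqrt{n}\,g^{1/2}(x)(1-g(x))^n\le 1/\sqrt{2}$, so no substantive difference remains.
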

\begin{proof}	
Note that, 
\begin{eqnarray}
S_{2n}^* &=& \sqrt n \sum_{x:r_n(x) \neq 0} (\delta_n(x) - \delta_g(x)) K'(\delta_g(x)) 
f_{\theta}^{1+\alpha}(x) u_{\theta}(x) \nonumber \\
&=&  \sqrt n \sum_{x:r_n(x) \neq 0} (r_n(x) - g(x)) K'(\delta_g(x)) f_{\theta}^{\alpha}(x) u_{\theta}(x) \nonumber\\
&& ~+ \sqrt{n}~ \sum_{x:r_n(x) = 0} g(x) K'(\delta_g(x)) f_{\theta}^{\alpha}(x) u_{\theta}(x)  \nonumber \\
&=&  \sqrt n \left( \frac{1}{n} \sum_{i=1}^n \left[ K'(\delta_g(X_i)) f_{\theta}^{\alpha}(X_i) u_{\theta}(X_i) 
- E_g\{ K'(\delta_g(X))f_{\theta}^{\alpha}(X)u_{\theta}(X)\} \right]\right) \nonumber\\
&& ~ + \sqrt{n}~ \sum_{x:r_n(x) = 0} g(x) K'(\delta_g(x)) f_{\theta}^{\alpha}(x) u_{\theta}(x). 
\label{EQ:S_2n_expression} 
\end{eqnarray}
Now, the first term in above converges in distribution to $N(0,V_g)$.
Considering the second term, we will show that
\begin{eqnarray}
\sqrt{n}~ \sum_{x:r_n(x) = 0} g(x) K'(\delta_g(x)) f_{\theta}^{\alpha}(x) u_{\theta}(x)
\mathop{\rightarrow}^\mathcal{P} 0.
\label{EQ:5eq-1}
\end{eqnarray}
Note that
\begin{eqnarray}
&& \sqrt{n}~ \sum_{x:r_n(x) = 0} g(x) K'(\delta_g(x)) f_{\theta}^{\alpha}(x) u_{\theta}(x)\nonumber\\
&=& \sqrt{n}~ \sum_{x} g(x) K'(\delta_g(x)) f_{\theta}^{\alpha}(x) u_{\theta}(x)I(r_n(x)),
\end{eqnarray}
where $I(y)=1$ if $y=0$ and $0$ otherwise. thus,
\begin{eqnarray}
&& E_g\left[\sqrt{n}~ 
\left|\sum_{x:r_n(x) = 0} g(x) K'(\delta_g(x)) f_{\theta}^{\alpha}(x) u_{\theta}(x)\right|\right] \nonumber\\
&=& E_g\left[\sqrt{n}~ 
\left|\sum_{x} g(x) K'(\delta_g(x)) f_{\theta}^{\alpha}(x) u_{\theta}(x) I(r_n(x))\right|\right] \nonumber\\
&\leq& \sum_{x} \left|g^{1/2}(x)K'(\delta_g(x)) f_{\theta}^{\alpha}(x) u_{\theta}(x)\right|  
\left[\sqrt{n}~ g^{1/2}(x)\left\{1-g(x)\right\}^n \right]\nonumber\\
&\leq& C_1 \sum_{x} g^{1/2}(x)f_{\theta}^{\alpha}(x) \left|u_{\theta}(x)\right|  
\left[\sqrt{n}~ g^{1/2}(x)\left\{1-g(x)\right\}^n \right]. \nonumber
\end{eqnarray}
The last inequality follows by  Assumption (SA7) and the strong law of large numbers (SLLN), 
under which 
\begin{eqnarray}
|K'(\delta)| = |(\delta+1)^{A-1}| < 2C = C_1,  ~~~~~~ \mbox{(say)}.
\label{EQ:K1_bound}
\end{eqnarray}
Further, for all $0<x<1$, $\left[\sqrt{n}~ g^{1/2}(x)\left\{1-g(x)\right\}^n \right] \rightarrow 0$
as $n\rightarrow\infty$ and its maximum over $0<x<1$ is bounded by $1/\sqrt{2}$. 
Hence, by assumption (SA5) and DCT it follows that 
\begin{eqnarray}
E_g\left[\sqrt{n}~ 
\left|\sum_{x:r_n(x) = 0} g(x) K'(\delta_g(x)) f_{\theta}^{\alpha}(x) u_{\theta}(x)\right|\right] 
\rightarrow 0.  \nonumber
\end{eqnarray}
Then, by Markov inequality, it follows that the second term in (\ref{EQ:S_2n_expression})
goes to zero in probability as $n\rightarrow\infty$ and so 
$S_{2n}^* \displaystyle\mathop{\rightarrow}^\mathcal{D} N(0, V_g)$.
Combining this with the previous Lemma, we have the required result.
\end{proof}

\begin{theorem}\label{THM:5discrete_asymp_MPSDE}
Under Assumptions (SA1)--(SA7) of \cite{Ghosh:2013},	
there exists a consistent sequence $\hat\theta_n$ of roots to the 
minimum penalized $S$-divergence  estimating equation (\ref{EQ:S-est_eqn_discrete}).
Also, the asymptotic distribution of $\sqrt n (\hat\theta_n - \theta^g)$ is $p$-dimensional
normal with mean $0$ and variance $J_g^{-1}V_g J_g^{-1}$,
where $J_g$ and $V_g$ are as defined in \cite{Ghosh:2013}.\\
\end{theorem}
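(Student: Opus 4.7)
The plan is to adapt the Cram\'er-type route used by \cite{Ghosh:2013} for the unpenalized MSDE to the penalized estimating function
\begin{equation*}
U_n(\theta) := \sum_{x:r_n(x)\neq 0} K(\delta_n(x)) f_\theta^{1+\alpha}(x) u_\theta(x) \;-\; h \sum_{x:r_n(x)=0} f_\theta^{1+\alpha}(x) u_\theta(x),
\end{equation*}
i.e., the left-hand side of (\ref{EQ:5MPSDE_est_eqn}). I would prove, in order: (a) that $\sqrt n\, U_n(\theta^g)\stackrel{\mathcal D}{\to} N(0,V_g)$; (b) a Taylor expansion of $U_n$ about $\theta^g$ with $\nabla U_n(\theta^g)\stackrel{\mathcal P}{\to} -J_g$ and a negligible remainder on a ball of radius $O(n^{-1/2})$; and (c) a Brouwer fixed-point step that manufactures a consistent root $\widehat\theta_n$ and inverts the expansion into the stated limit law.

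For step (a), use the population identity $\sum_x K(\delta_g(x)) f_{\theta^g}^{1+\alpha}(x) u_{\theta^g}(x)=0$ at the best-fitting parameter $\theta^g$ to decompose
\begin{equation*}
\sqrt n\, U_n(\theta^g) \;=\; S_{1n}^* \;-\; \sqrt n \sum_{x:r_n(x)=0}\bigl[K(\delta_g(x))+h\bigr]\, f_{\theta^g}^{1+\alpha}(x)\, u_{\theta^g}(x).
\end{equation*}
Lemma \ref{LEM:5S1n_distr} already delivers $S_{1n}^*\stackrel{\mathcal D}{\to} N(0,V_g)$, so only the empty-cell correction needs new bookkeeping. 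I would take expected absolute values, exploit $\Pr(r_n(x)=0)=(1-g(x))^n$, insert a factitious $g^{1/2}(x)$, and use the sharp universal bound $\sqrt n\, g^{1/2}(x)(1-g(x))^n \le (2e)^{-1/2}$ that is already exploited in the proof of Lemma \ref{LEM:5S1n_distr}. Combined with (SA5), condition (SA7) which uniformly bounds $K(\delta_g)$, pointwise decay $\sqrt n(1-g(x))^n\to 0$ for $g(x)>0$, DCT and Markov's inequality, the correction is $o_p(1)$, yielding $\sqrt n\, U_n(\theta^g)\stackrel{\mathcal D}{\to} N(0,V_g)$.

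For steps (b) and (c), differentiate $U_n(\theta)$ in $\theta$ and again split across $\{x:r_n(x)\ne 0\}$ and $\{x:r_n(x)=0\}$. The first contribution converges uniformly on a neighbourhood of $\theta^g$ to $-J_g$ by exactly the moment computations of \cite{Ghosh:2013}. The derivative of the penalty sum is dominated by $\sum_x\bigl|\nabla\{f_\theta^{1+\alpha}u_\theta\}\bigr|\, I(r_n(x)=0)$, and the same empty-cell expectation bound, used a second time, shows this vanishes in probability, so $\nabla U_n(\theta^g)\stackrel{\mathcal P}{\to} -J_g$. With $\sqrt n\,U_n(\theta^g)=O_p(1)$ and $-J_g$ invertible, a standard Brouwer fixed-point argument applied on a ball of radius $C/\sqrt n$ produces a consistent root $\widehat\theta_n$, and a single Taylor step then gives
\begin{equation*}
\sqrt n(\widehat\theta_n-\theta^g) \;=\; -[\nabla U_n(\theta^g)]^{-1}\sqrt n\, U_n(\theta^g)+o_p(1)\;\stackrel{\mathcal D}{\to}\; N\bigl(0,\,J_g^{-1}V_gJ_g^{-1}\bigr).
\end{equation*}

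The main obstacle is controlling the sums indexed by $\{x:r_n(x)=0\}$. Unlike in the unpenalized proof, these terms are genuinely present, and there is no intrinsic $g(x)$ factor in the penalty sum to pair with the Bernoulli empty-cell probability $(1-g(x))^n$; the essential new device is to insert a $g^{1/2}(x)$ by hand and pay for it through (SA5)-type summability, then cash in the uniform bound on $\sqrt n\, g^{1/2}(x)(1-g(x))^n$. This subtlety reappears in the Hessian and in the Taylor remainder, where the bounds must hold uniformly over $\theta$ in a shrinking neighbourhood of $\theta^g$ rather than only at $\theta^g$; making that uniformity explicit is the technically fussiest part of the argument.
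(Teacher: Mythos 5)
Your proposal is sound, and its probabilistic core coincides with the paper's: both reduce the normality claim to Lemma \ref{LEM:5S1n_distr} for $S_{1n}^*$ and then kill the empty-cell contributions (both the explicit penalty sum and the $K(\delta_g)$ terms left over from the population identity) by taking expectations, pairing each summand with $\Pr(r_n(x)=0)=(1-g(x))^n$, inserting a $g^{1/2}(x)$ paid for by (SA5), and invoking the uniform bound on $\sqrt n\,g^{1/2}(x)(1-g(x))^n$ together with DCT and Markov — exactly the device used in the paper. Where you genuinely diverge is the existence/consistency step: you work throughout with the estimating function $U_n(\theta)$ from (\ref{EQ:5MPSDE_est_eqn}), prove $\nabla U_n(\theta^g)\stackrel{\mathcal P}{\to}-J_g$, and manufacture a $\sqrt n$-consistent root by a Brouwer fixed-point argument on a ball of radius $C/\sqrt n$, then invert the expansion directly. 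The paper instead runs Cram\'er's classical argument on the penalized objective itself: it Taylor-expands $PSD^h_{(\alpha,\lambda)}(\mathbf r_n,\mathbf f_\theta)$ on a fixed small sphere $Q_a$ about $\theta^g$, shows the linear term is $o_p(1)$ (again via the empty-cell bounds), the quadratic term is $\le -ca^2$ and the cubic term is $O(a^3)$ with probability tending to one, deduces a local minimizer inside $Q_a$, and only then obtains the limit law by expanding the estimating equation and appealing to Lemma 4.1 of \cite{Lehmann:1983}. Your route buys $\sqrt n$-consistency in one stroke and avoids the cubic-term bookkeeping, but it shifts the burden to uniform control of $\nabla U_n(\theta)$ (hence of $K'$, $K''$ composed with $\delta_n$ and second derivatives of $f_\theta^{1+\alpha}u_\theta$) over the shrinking ball — the fussiness you correctly flag, and which in the paper is absorbed into the $S_2$, $S_3$ bounds already established in \cite{Ghosh:2013}; the paper's route has the additional minor advantage that the root it produces is explicitly a local minimizer of the penalized divergence, which is the object one actually computes.
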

\begin{proof} 
\textit{Consistency:} Following the proof of Theorem 1 of \cite{Ghosh:2013}, 
let us consider the behavior of $PSD^h_{(\alpha, \lambda)}(\mathbf{r}_n,\mathbf{f}_{\theta})$ on a sphere $Q_a$ 
of radius $a$ and center at $\theta^g$. We wish to show that, for sufficiently small $a$, 
\begin{equation}
PSD^h_{(\alpha, \lambda)}(\mathbf{r}_n,\mathbf{f}_{\theta}) > 
PSD^h_{(\alpha, \lambda)}(\mathbf{r}_n,\mathbf{f}_{\theta^g}) ~~ \forall \theta ~~ \mbox{ on the surface of } Q_a,
\label{EQ:5MPSDE_asymp_to_prove}
\end{equation}
with probability tending to one so that the penalized $S$-divergence also has 
a local minimum with respect to $\theta$ in the interior of $Q_a$. At a local 
minimum, the estimating equations must be satisfied. Therefore, for any $a>0$ sufficiently small, 
the minimum $S$-divergence  estimating equation have a solution $\theta_n$ within $Q_a$ with probability 
tending to one as $n \rightarrow \infty$.

Now taking a Taylor series expansion of $PSD^h_{(\alpha, \lambda)}(\mathbf{r}_n,\mathbf{f}_{\theta})$ 
about $\theta = \theta^g$, we get 
\begin{eqnarray}
&& PSD^h_{(\alpha, \lambda)}(\mathbf{r}_n,\mathbf{f}_{\theta^g}) - 
PSD^h_{(\alpha, \lambda)}(\mathbf{r}_n,\mathbf{f}_{\theta})
\nonumber \\\nonumber \\
&& ~~~~ = - \sum_j (\theta_j - \theta_j^g)\nabla_j 
PSD^h_{(\alpha, \lambda)}(\mathbf{r}_n,\mathbf{f}_{\theta})|_{\theta = \theta^g} 
\nonumber \\
&&  	~~~~~~~~~~~~~~~ - \frac{1}{2} 
\sum_{j,k}  (\theta_j - \theta_j^g)(\theta_k - \theta_k^g)\nabla_{jk} 
PSD^h_{(\alpha, \lambda)}(\mathbf{r}_n,\mathbf{f}_{\theta})|_{\theta = \theta^g} 
\nonumber \\
&&  	~~~~~~~~~~~~~~~ -      \frac{1}{6} \sum_{j,k,l} (\theta_j - \theta_j^g)(\theta_k - \theta_k^g)
(\theta_l-\theta_l^g)\nabla_{jkl} PSD^h_{(\alpha, \lambda)}(\mathbf{r}_n,\mathbf{f}_{\theta})|_{\theta = \theta^*} 
\nonumber \\\nonumber \\
&& ~~~~ = S_1 + S_2 + S_3, ~~~~~~~~ (say) \nonumber 
\end{eqnarray}
where $\theta^*$ lies between $\theta^g$ and $\theta$. 

For the linear term $S_1$, we consider
\begin{eqnarray}
\nabla_j PSD^h_{(\alpha, \lambda)}(\mathbf{r}_n,\mathbf{f}_{\theta})|_{\theta = \theta^g} 
&=& -(1+\alpha) ~ \sum_{x : r_n(x) \neq 0}  K(\delta_n^g(x))f_{\theta^g}^{1+\alpha}(x)u_{j\theta^g}(x) \nonumber\\
&& ~~~ + h (1+\alpha)  \sum_{x : r_n(x) = 0}  f_{\theta^g}^{1+\alpha}(x)u_{j\theta^g}(x)
\end{eqnarray}
where $\delta_n^g(x)$ is the $\delta_n(x)$ evaluated at $\theta = \theta^g$. We will now show that 
\begin{eqnarray}
\sum_{x : r_n(x) \neq 0}  K(\delta_n^g(x))f_{\theta^g}^{1+\alpha}(x)u_{j\theta^g}(x) 
\displaystyle\mathop{\rightarrow}^\mathcal{P} \sum_x K(\delta_g^g(x)) f_{\theta^g}^{1+\alpha}(x) u_{j\theta^g}(x), 
\end{eqnarray}
as $n \rightarrow \infty$ and note that the right hand side of above is zero by definition of 
the minimum PSD estimator. Note that the one-term Taylor series expansion yields
\begin{eqnarray}
& &   \left|\sum_{x : r_n(x) \neq 0}  K(\delta_n^g(x))f_{\theta^g}^{1+\alpha}(x)u_{j\theta^g}(x) 
- \sum_{x}   K(\delta_g^g(x)) f_{\theta^g}^{1+\alpha}(x) u_{j\theta^g}(x)\right|  \nonumber \\
&\le &   \left|\sum_{x : r_n(x) \neq 0}  K(\delta_n^g(x))f_{\theta^g}^{1+\alpha}(x)u_{j\theta^g}(x) 
- \sum_{x : r_n(x) \neq 0}  K(\delta_g^g(x)) f_{\theta^g}^{1+\alpha}(x) u_{j\theta^g}(x) \right|  \nonumber\\
&& ~~~ + \sum_{x : r_n(x) = 0} \left| K(\delta_g^g(x)) f_{\theta^g}^{1+\alpha}(x) u_{j\theta^g}(x) \right| 
\nonumber \\
&\le & C_1 \sum_{x : r_n(x) \neq 0}  |\delta_n^g(x) - \delta_g^g(x)| f_{\theta^g}^{1+\alpha}(x) |u_{j\theta^g}(x)| 
+ \sum_{x : r_n(x) = 0} \left| K(\delta_g^g(x)) f_{\theta^g}^{1+\alpha}(x) u_{j\theta^g}(x) \right|  \nonumber \\
&& ~~~~~~~~~~~~~~~~~~~~~~~~~~\mbox{ (by Equation \ref{EQ:K1_bound})}\nonumber\\
&\le & C_1 \sum_x  |\delta_n^g(x) - \delta_g^g(x)| f_{\theta^g}^{1+\alpha}(x) |u_{j\theta^g}(x) |  
+ \sum_{x : r_n(x) = 0} \left| K(\delta_g^g(x)) f_{\theta^g}^{1+\alpha}(x) u_{j\theta^g}(x) \right|. \nonumber 
\end{eqnarray}
But, it was proved in Theorem 1 of \cite{Ghosh:2013} that, as $n \rightarrow \infty$ 
$$
\displaystyle\sum_x |\delta_n^g(x)-\delta_g^g(x)| f_{\theta^g}^{1+\alpha}(x)|u_{j\theta^g}(x)| 
\mathop{\rightarrow}^\mathcal{P} 0.
$$
Further, along the lines of the proof of the convergence result in (\ref{EQ:5eq-1}) as given in Lemma \ref{LEM:5S1n_distr}, 
one can show that, as  $n \rightarrow \infty$ 
$$
\displaystyle\sum_{x : r_n(x)=0} |K(\delta_g^g(x)) f_{\theta^g}^{1+\alpha}(x) u_{j\theta^g}(x) |=o_p(n^{-1/2}),
\mbox{ and }
\displaystyle\sum_{x : r_n(x)=0}  f_{\theta^g}^{1+\alpha}(x)u_{j\theta^g}(x) =  o_p(n^{-1/2}).
$$ 
Combining these, we get $\nabla_j PSD^h_{(\alpha, \lambda)}(\mathbf{r}_n,\mathbf{f}_{\theta})|_{\theta = \theta^g} 
\mathop{\rightarrow}^\mathcal{P} 0$, as  $n \rightarrow \infty$. 
Thus, with probability tending to one, $|S_1| < p a^3$, 
where $p$ is the dimension of $\theta$ and $a$ is the radius of $Q_a$.
	

By a similar extension of the proof of Theorem 1 of \cite{Ghosh:2013},
we can show that there exists $c > 0$ and $a_0 > 0$ such that for $a < a_0$, we have  $ S_2 < -c a^2$ 
with probability tending to one and $|S_3|< b a^3$ on the sphere $Q_a$ with probability tending to one.
This implies that (\ref{EQ:5MPSDE_asymp_to_prove}) holds, completing the proof of  the consistency part.
\\

\noindent
\textit{Asymptotic Normality:}		For the asymptotic normality, 
let us rewrite the estimating equation of the MPSDE in (\ref{EQ:5MPSDE_est_eqn}) as
\begin{equation}
\sum_{x : r_n(x) \neq 0}  K(\delta_n^g(x))f_{\theta^g}^{1+\alpha}(x)u_{j\theta^g}(x) 
- h \sum_{x : r_n(x) = 0}  f_{\theta^g}^{1+\alpha}(x)u_{j\theta^g}(x) = 0.
\label{EQ:5MPSDE_est_asymp}
\end{equation}
Now the second term in the left-hand side (LHS) of Equation (\ref{EQ:5MPSDE_est_asymp}) 
converges to zero in probability as proved above in the consistency part.
We will expand the first term in the LHS of (\ref{EQ:5MPSDE_est_asymp}) 
in a Taylor series about $\theta = \theta^g$ to get
\begin{eqnarray}
& & \sum_{x : r_n(x) \neq 0} K(\delta_n(x)) f_{\theta}^{1+\alpha}(x)u_{\theta}(x)  \nonumber \\
&& ~~~ = \sum_{x : r_n(x) \neq 0} K(\delta_n^g(x)) f_{\theta^g}^{1+\alpha}(x)u_{\theta^g}(x) 
\nonumber \\
&& ~~~ + \sum_k (\theta_k - \theta_k^g) \nabla_k 
\left[ \sum_{x : r_n(x) \neq 0} K(\delta_n(x))f_{\theta}^{1+\alpha}(x)u_{\theta}(x) \right]_{\theta = \theta^g} 
\nonumber \\
&& ~~~ + \frac{1}{2} \sum_{k,l} (\theta_k - \theta_k^g)(\theta_l - \theta_l^g)\nabla_{kl} 
\left[\sum_{x : r_n(x) \neq 0} K(\delta_n(x)) f_{\theta}^{1+\alpha}(x)u_{\theta}(x)\right]_{\theta = \theta'},    
\label{EQ:5_59}
\end{eqnarray}
where $\theta'$ lies in between $\theta$ and $\theta^g$.
Now, let $\theta_n$ be the solution of the minimum PSD estimating equation, 
which exist and is consistent by the previous part. 
Replacing $\theta$ by $\theta_n$ in Equation (\ref{EQ:5_59}), its LHS becomes zero, yielding
\begin{eqnarray}
&& - \sqrt n \sum_{x : r_n(x) \neq 0} K(\delta_n^g(x)) f_{\theta^g}^{1+\alpha}(x)u_{\theta^g}(x)  \nonumber \\ 
&& ~~~ =  \sqrt n \sum_k (\theta_{nk} - \theta_k^g) \times \left\{ \nabla_k 
\left[\sum_{x : r_n(x)\neq 0}K(\delta_n(x))f_{\theta}^{1+\alpha}(x)u_{\theta}(x)\right]_{\theta=\theta^g}\right. 
\nonumber \\
&& ~~~ + \left. \frac{1}{2} \sum_{l} (\theta_{nl} - \theta_l^g)\nabla_{kl} 
\left[\sum_{x : r_n(x)\neq 0} K(\delta_n(x))f_{\theta}^{1+\alpha}(x)u_{\theta}(x)\right]_{\theta=\theta'}\right\}. 
\label{EQ:5_60}
\end{eqnarray}
Note that, the first term within the bracketed quantity in the RHS of Equation (\ref{EQ:5_60}) 
converges to $J_g$ with probability tending to one (by a proof similar to that in Lemma \ref{LEM:5S1n_distr}), 
while the second bracketed term is an $o_p(1)$ term (as proved in the proof of consistency part). 
Also, by using Lemma \ref{LEM:5S1n_distr}, we get
\begin{eqnarray}
&& \sqrt n \sum_{x : r_n(x) \neq 0} K(\delta_n^g(x)) f_{\theta^g}^{1+\alpha}(x)u_{\theta^g}(x) \nonumber \\
&&  ~~~  =  \sqrt n \sum_{x : r_n(x) \neq 0}  [K(\delta_n^g(x)) - K(\delta_g^g(x))]  
f_{\theta^g}^{1+\alpha}(x)u_{\theta^g}(x) \nonumber \\
&& ~~~ =  S_{1n}^*|_{\theta=\theta^g} \mathop{\rightarrow}^\mathcal{D} N_p(0, V_g). 
\end{eqnarray}
Therefore, the theorem follows by the Lemma 4.1 of \cite{Lehmann:1983}.
\end{proof} 

\bigskip
In the particular case, when the true distribution $G$ belongs to the model family with $G = F_\theta$ 
for some $\theta \in \Theta$, then $\theta^g=\theta$ and $\sqrt n (\theta_n - \theta)$ 
has asymptotic distribution as $N_p(0, J^{-1} V J^{-1})$, where $J=M_{\alpha}$ and 
$V=M_{2\alpha} - N_\alpha N_\alpha^T$ with 
\begin{eqnarray}
M_\alpha =  \int u_\theta(x)u_\theta^T(x)f_\theta^{1+\alpha}(x) dx,
~~~~~~~~~~
N_\alpha = \int u_\theta(x)f_\theta^{1+\alpha}(x)dx.\nonumber
\end{eqnarray}
Note that, as in the case of the minimum $S$-divergence estimators, 
their penalized version also has asymptotic distribution independent of the parameter $\lambda$ 
under the model family.

Therefore we have observed that the first order asymptotic properties of the 
minimum penalized $S$-divergence estimator at the model family 
is exactly the same as that of the minimum $S$-divergence estimator.
This implies that the first order influence function of these two estimators will also be the same
so that the robustness properties of these two estimators are expected to be equivalent. 
Therefore, the minimum penalized $S$-divergence estimators generalize 
the minimum $S$-divergence estimators with no loss in their asymptotic efficiency and 
no degradation in their  robustness properties, 
and provide us the extra facility of inlier correction
at small sample sizes. Intuitively, it is quite clear that the 
performance of the penalized divergences in terms of their ability to successfully handle 
the inliers and empty cells in a small sample would depend on the choice of the penalty factor $h$.
In the next section, we will examine this characteristic of the 
minimum penalized $S$-divergence estimators through an extensive simulation study under 
the Poisson model with small sample sizes.

\section{Numerical Illustrations : Choice of the Penalty Factor $h$}
\label{SEC:choice_h}

In the previous section we have seen that the asymptotic properties of the MPSDEs are the same as those 
of the MSDEs. However, due to the special nature of their construction, the small sample properties of the MPSDEs are often
significantly different; we expect that the MPSDEs will have substantially superior performance 
in the presence of inliers and empty cells provided the penalty factor $h$, 
which has a crucial role in determining the small sample performance of the MPSDE, is  chosen carefully. 
In this section, we will empirically examine these small sample performances 
of the proposed penalized $S$-divergence estimators under the Poisson model family.

For our numerical illustrations, we generate random samples of small sizes ($n = 10, 15, 20$) 
from a Poisson distribution with mean $\theta$ and 
compute the minimum penalized $S$-divergence estimator of the parameter $\theta$ under the Poisson model for each given sample.
The sample generation process is replicated  $1000$ times to generate the empirical MSE of the MPSDEs; 
we then compare the performance of the MPSDEs based on the empirical MSE 
for different values of tuning parameters $\alpha$, $\lambda$  and $h$. 
We will restrict ourselves only to the range of positive $h$, which eliminates the possibility of a negative divergence. 
Further, since the small sample properties usually depend on the mean of the Poisson model, 
we will consider several values of $\theta$ ranging from $3$ to $9$.
For brevity in presentation, we will only present some interesting cases in 
Figures \ref{FIG:5MSE_MPSDE_n10} to \ref{FIG:5MSE_MPSDE_n20}.

\begin{figure}
\centering
\subfloat[$\alpha=0$, $\lambda=0$]{
\includegraphics[width=0.23\textwidth]{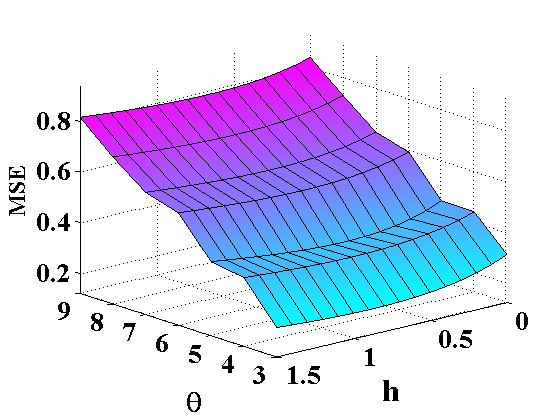}
\label{fig:MPSDE_n10_L0_a0}}
~ 
\subfloat[$\alpha=0.1$, $\lambda=0$]{
\includegraphics[width=0.23\textwidth]{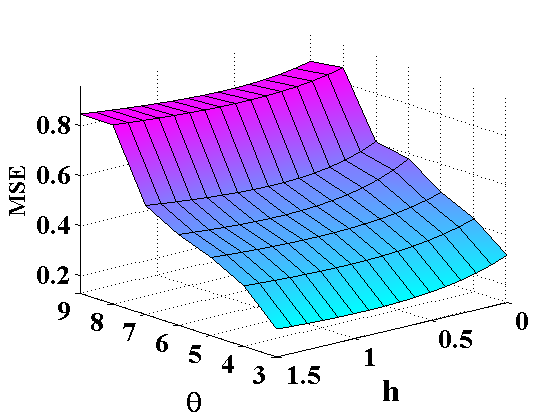}
\label{fig:MPSDE_n10_L0_a1}}
~ 
\subfloat[$\alpha=0.25$, $\lambda=0$]{
\includegraphics[width=0.23\textwidth]{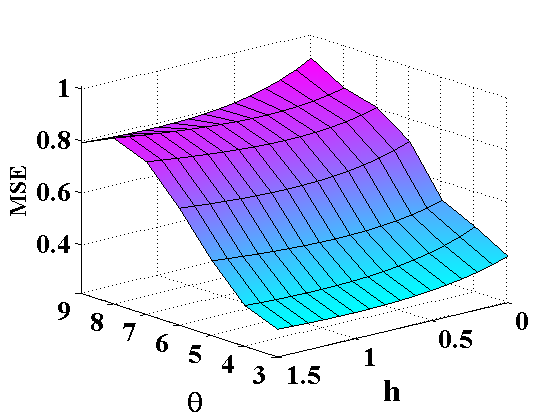}
\label{fig:MPSDE_n10_L0_a25}}
~ 
\subfloat[$\alpha=0.5$, $\lambda=0$]{
\includegraphics[width=0.23\textwidth]{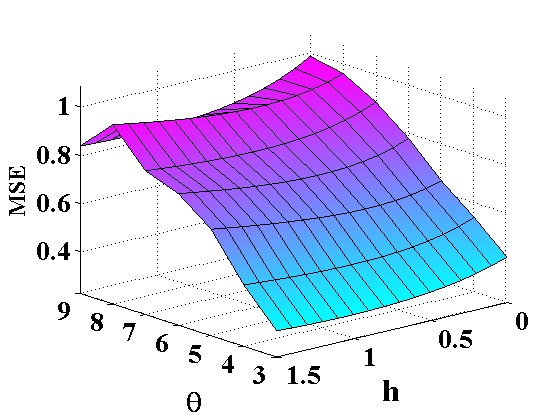}
\label{fig:MPSDE_n10_L0_a5}}
\\ 
\subfloat[$\alpha=0$, $\lambda=-0.5$]{
\includegraphics[width=0.23\textwidth]{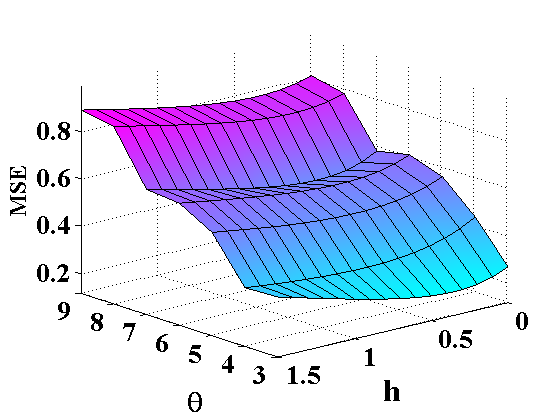}
\label{fig:MPSDE_n10_L-5_a0}}
~ 
\subfloat[$\alpha=0.1$, $\lambda=-0.5$]{
\includegraphics[width=0.23\textwidth]{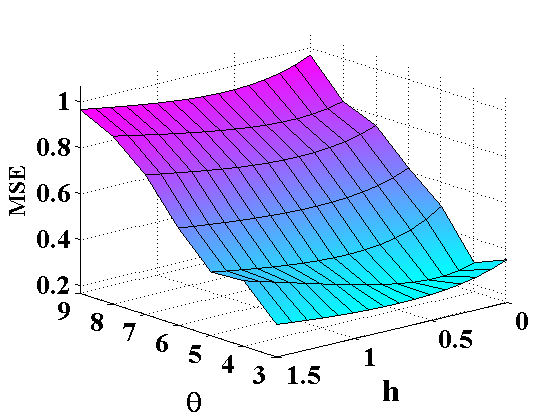}
\label{fig:MPSDE_n10_L-5_a1}}
~ 
\subfloat[$\alpha=0.25$, $\lambda=-0.5$]{
\includegraphics[width=0.23\textwidth]{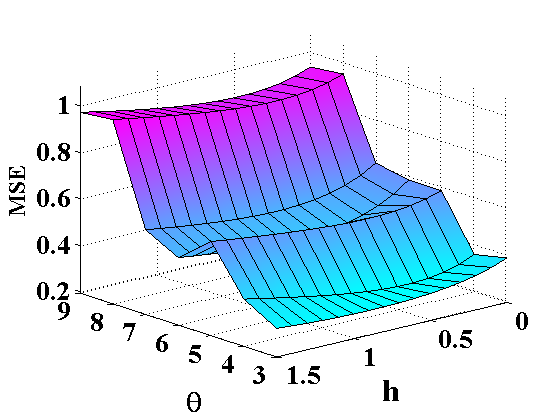}
\label{fig:MPSDE_n10_L-5_a25}}
~ 
\subfloat[$\alpha=0.5$, $\lambda=-0.5$]{
\includegraphics[width=0.23\textwidth]{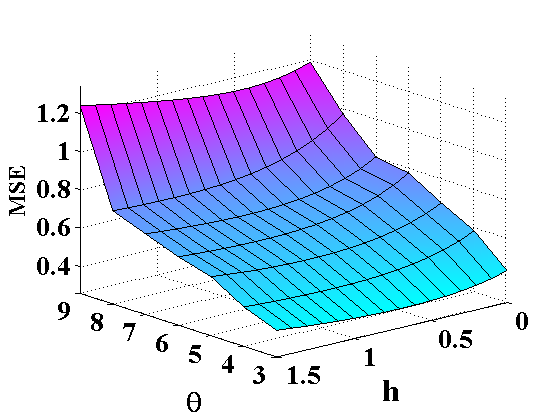}
\label{fig:MPSDE_n10_L-5_a5}}
\\ 
\subfloat[$\alpha=0$, $\lambda=-1$]{
\includegraphics[width=0.23\textwidth]{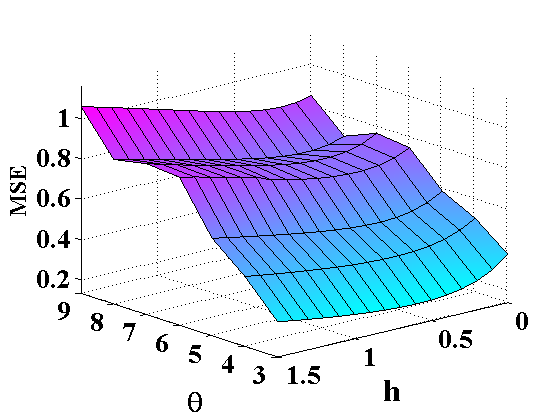}
\label{fig:MPSDE_n10_L-10_a0}}
~ 
\subfloat[$\alpha=0.1$, $\lambda=-1$]{
\includegraphics[width=0.23\textwidth]{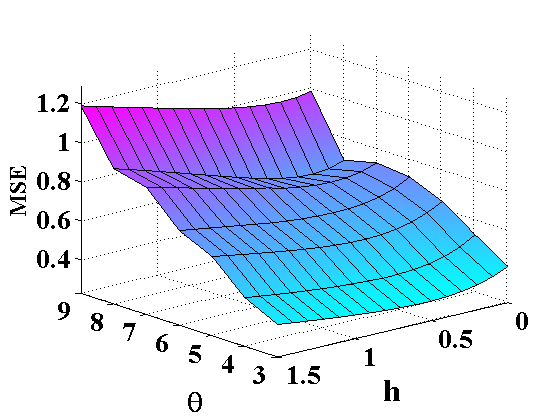}
\label{fig:MPSDE_n10_L-10_a1}}
~ 
\subfloat[$\alpha=0.25$, $\lambda=-1$]{
\includegraphics[width=0.23\textwidth]{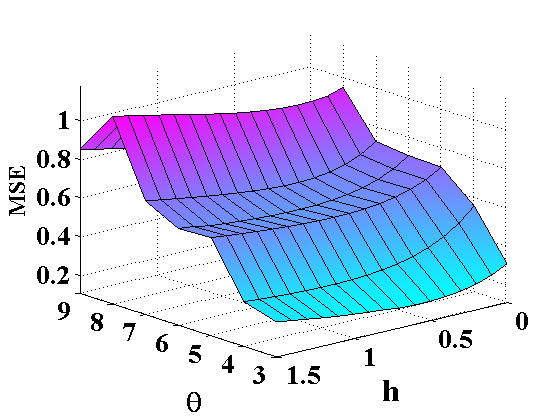}
\label{fig:MPSDE_n10_L-10_a25}}
~ 
\subfloat[$\alpha=0.5$, $\lambda=-1$]{
\includegraphics[width=0.23\textwidth]{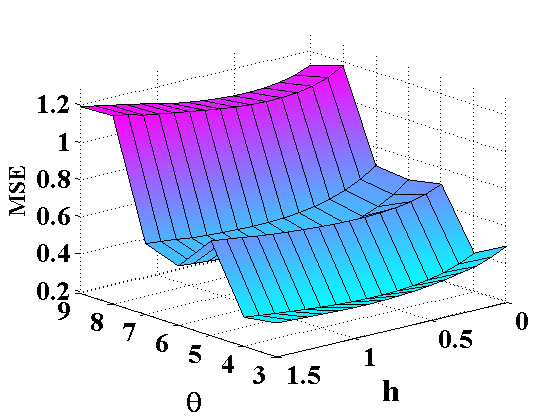}
\label{fig:MPSDE_n10_L-10_a5}}
\\
\subfloat[$\alpha=0$, $\lambda=-1.5$]{
\includegraphics[width=0.23\textwidth]{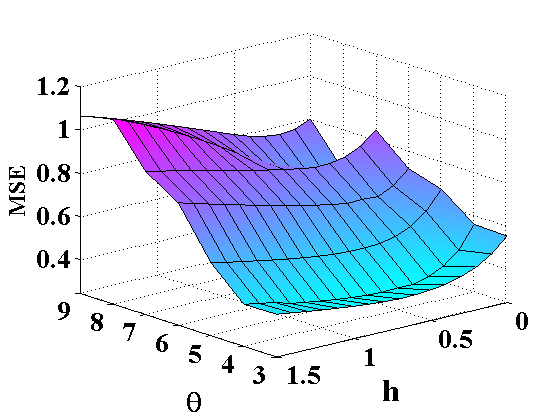}
\label{fig:MPSDE_n10_L-15_a0}}
~ 
\subfloat[$\alpha=0.1$, $\lambda=-1.5$]{
\includegraphics[width=0.23\textwidth]{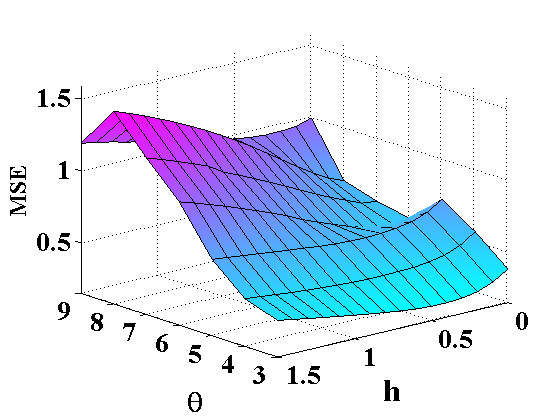}
\label{fig:MPSDE_n10_L-15_a1}}
~ 
\subfloat[$\alpha=0.25$, $\lambda=-1.5$]{
\includegraphics[width=0.23\textwidth]{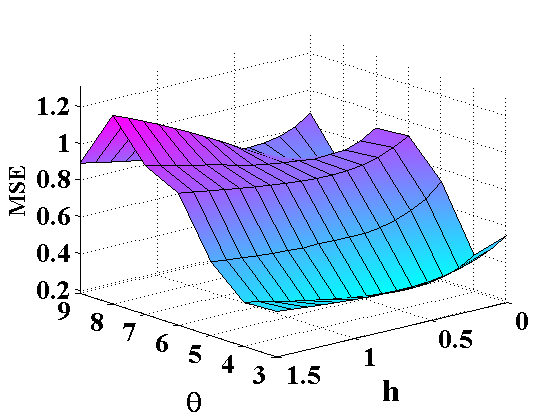}
\label{fig:MPSDE_n10_L-15_a25}}
~ 
\subfloat[$\alpha=0.5$, $\lambda=-1.5$]{
\includegraphics[width=0.23\textwidth]{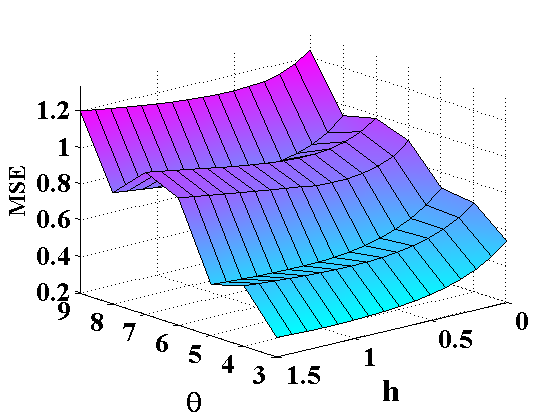}
\label{fig:MPSDE_n10_L-15_a5}}
\\ 
\subfloat[$\alpha=0$, $\lambda=-2$]{
\includegraphics[width=0.23\textwidth]{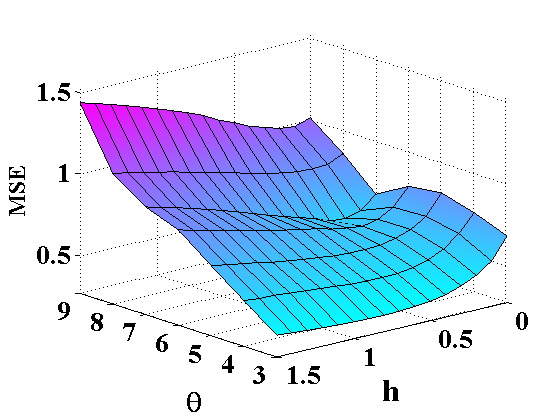}
\label{fig:MPSDE_n10_L-20_a0}}
~ 
\subfloat[$\alpha=0.1$, $\lambda=-2$]{
\includegraphics[width=0.23\textwidth]{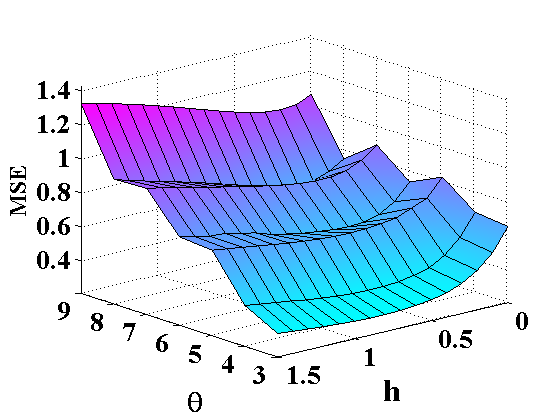}
\label{fig:MPSDE_n10_L-20_a1}}
~ 
\subfloat[$\alpha=0.25$, $\lambda=-2$]{
\includegraphics[width=0.23\textwidth]{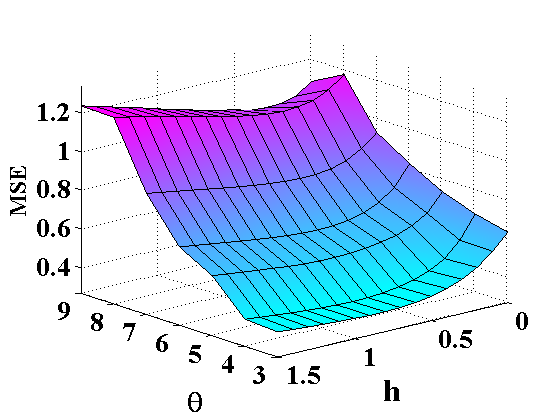}
\label{fig:MPSDE_n10_L-20_a25}}
~ 
\subfloat[$\alpha=0.5$, $\lambda=-2$]{
\includegraphics[width=0.23\textwidth]{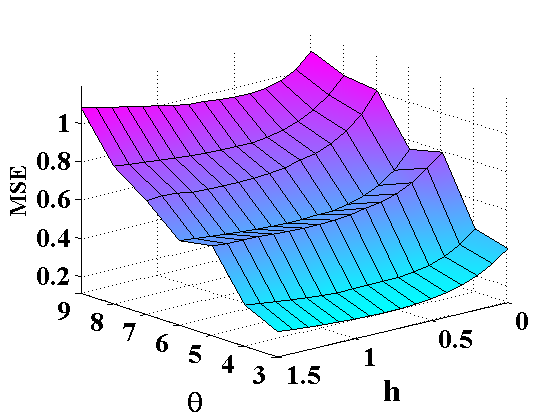}
\label{fig:MPSDE_n10_L-20_a5}}
\caption{MSE of $\widehat{\theta}_{\alpha, \lambda}$ over $\theta$ and $h$ for different $\lambda$ and $\alpha$ at sample size $n=10$. }
\label{FIG:5MSE_MPSDE_n10}
\end{figure}

\begin{figure}
\centering
\subfloat[$\alpha=0$, $\lambda=0$]{
\includegraphics[width=0.23\textwidth]{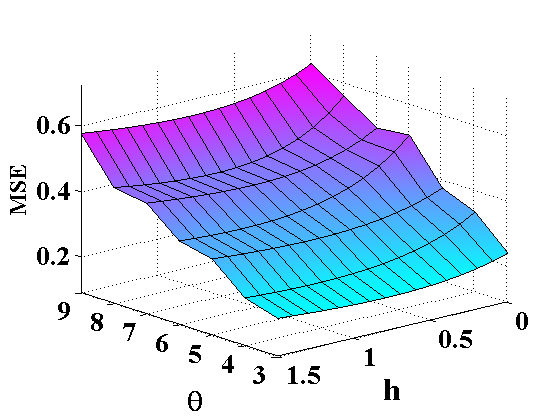}
\label{fig:MPSDE_n15_L0_a0}}
~ 
\subfloat[$\alpha=0.1$, $\lambda=0$]{
\includegraphics[width=0.23\textwidth]{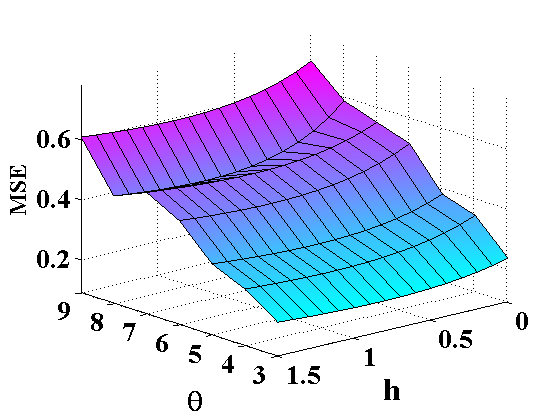}
\label{fig:MPSDE_n15_L0_a1}}
~ 
\subfloat[$\alpha=0.25$, $\lambda=0$]{
\includegraphics[width=0.23\textwidth]{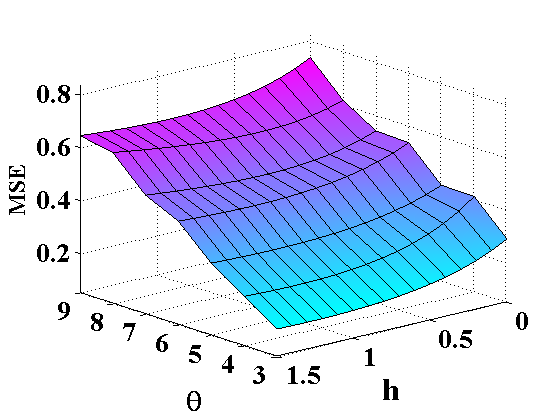}
\label{fig:MPSDE_n15_L0_a25}}
~ 
\subfloat[$\alpha=0.5$, $\lambda=0$]{
\includegraphics[width=0.23\textwidth]{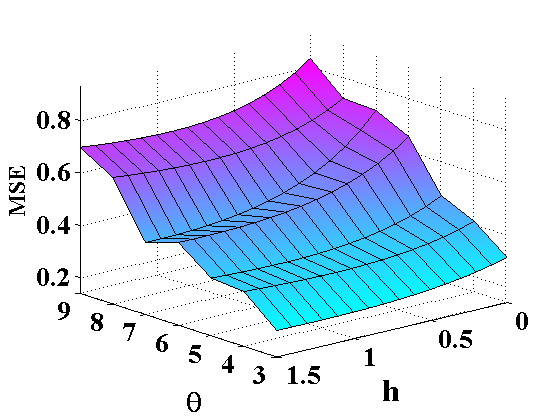}
\label{fig:MPSDE_n15_L0_a5}}
\\ 
\subfloat[$\alpha=0$, $\lambda=-0.5$]{
\includegraphics[width=0.23\textwidth]{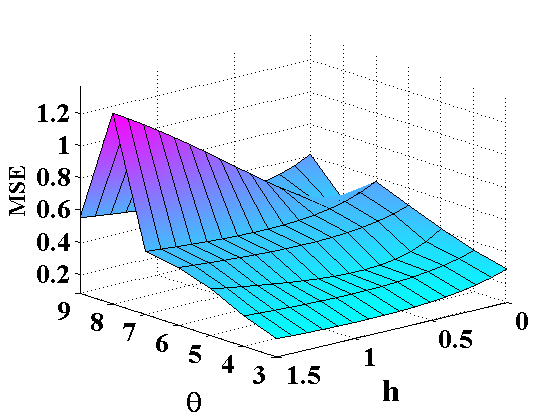}
\label{fig:MPSDE_n15_L-5_a0}}
~ 
\subfloat[$\alpha=0.1$, $\lambda=-0.5$]{
\includegraphics[width=0.23\textwidth]{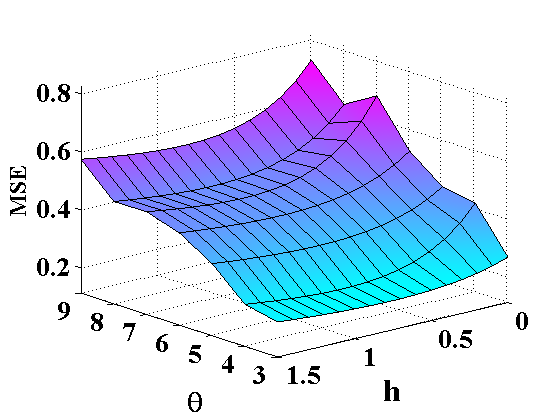}
\label{fig:MPSDE_n15_L-5_a1}}
~ 
\subfloat[$\alpha=0.25$, $\lambda=-0.5$]{
\includegraphics[width=0.23\textwidth]{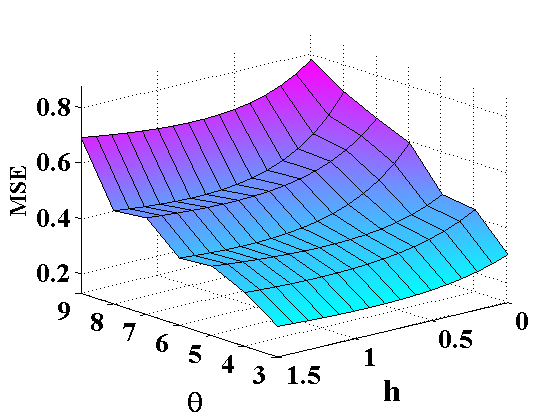}
\label{fig:MPSDE_n15_L-5_a25}}
~ 
\subfloat[$\alpha=0.5$, $\lambda=-0.5$]{
\includegraphics[width=0.23\textwidth]{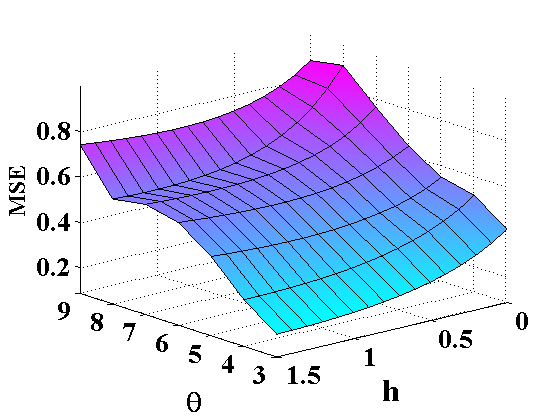}
\label{fig:MPSDE_n15_L-5_a5}}
\\ 
\subfloat[$\alpha=0$, $\lambda=-1$]{
\includegraphics[width=0.23\textwidth]{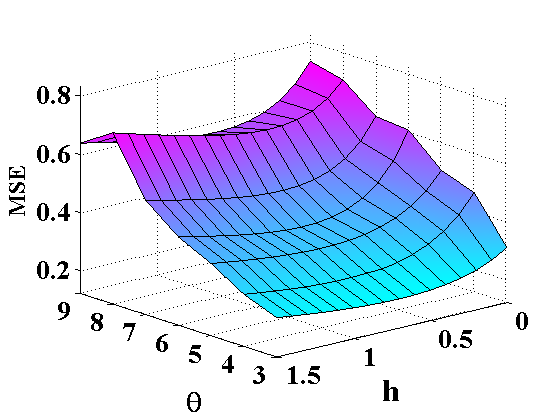}
\label{fig:MPSDE_n15_L-10_a0}}
~ 
\subfloat[$\alpha=0.1$, $\lambda=-1$]{
\includegraphics[width=0.23\textwidth]{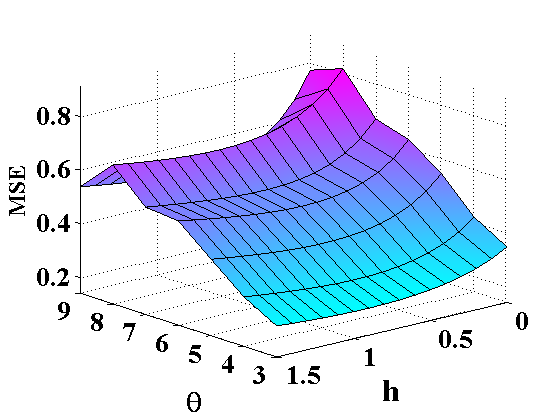}
\label{fig:MPSDE_n15_L-10_a1}}
~ 
\subfloat[$\alpha=0.25$, $\lambda=-1$]{
\includegraphics[width=0.23\textwidth]{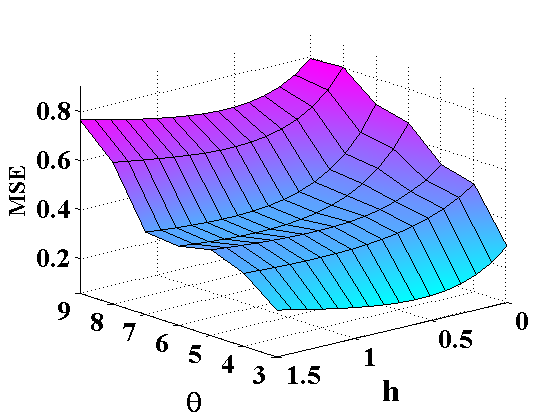}
\label{fig:MPSDE_n15_L-10_a25}}
~ 
\subfloat[$\alpha=0.5$, $\lambda=-1$]{
\includegraphics[width=0.23\textwidth]{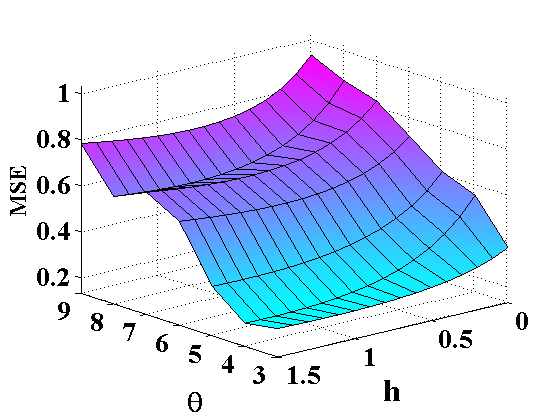}
\label{fig:MPSDE_n15_L-10_a5}}
\\
\subfloat[$\alpha=0$, $\lambda=-1.5$]{
\includegraphics[width=0.23\textwidth]{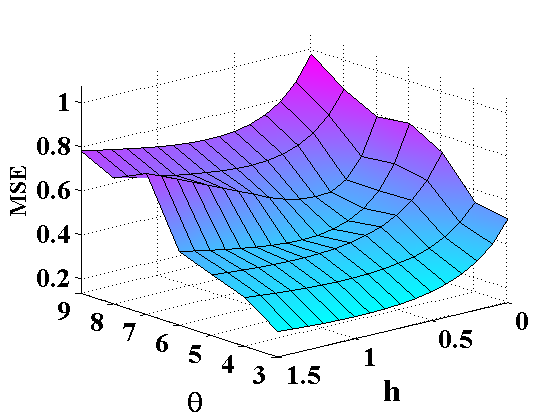}
\label{fig:MPSDE_n15_L-15_a0}}
~ 
\subfloat[$\alpha=0.1$, $\lambda=-1.5$]{
\includegraphics[width=0.23\textwidth]{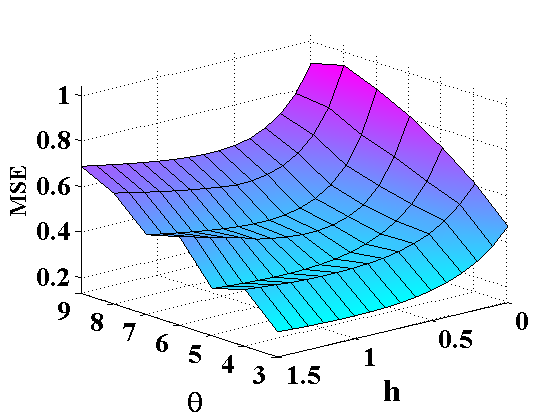}
\label{fig:MPSDE_n15_L-15_a1}}
~ 
\subfloat[$\alpha=0.25$, $\lambda=-1.5$]{
\includegraphics[width=0.23\textwidth]{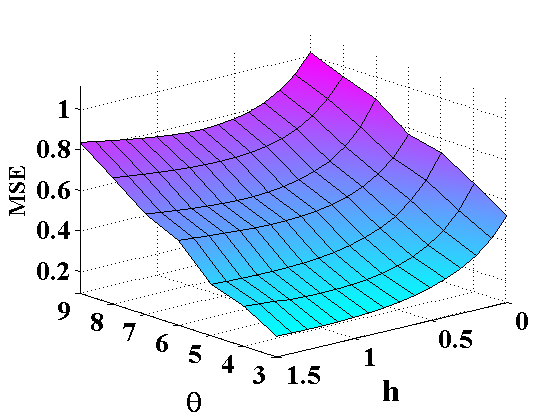}
\label{fig:MPSDE_n15_L-15_a25}}
~ 
\subfloat[$\alpha=0.5$, $\lambda=-1.5$]{
\includegraphics[width=0.23\textwidth]{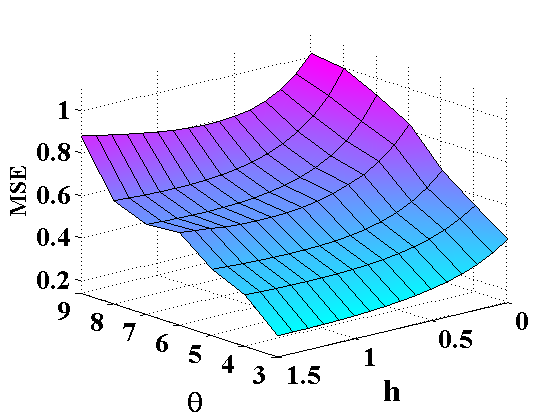}
\label{fig:MPSDE_n15_L-15_a5}}
\\ 
\subfloat[$\alpha=0$, $\lambda=-2$]{
\includegraphics[width=0.23\textwidth]{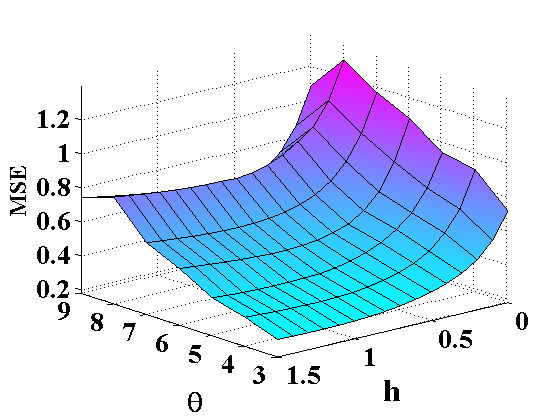}
\label{fig:MPSDE_n15_L-20_a0}}
~ 
\subfloat[$\alpha=0.1$, $\lambda=-2$]{
\includegraphics[width=0.23\textwidth]{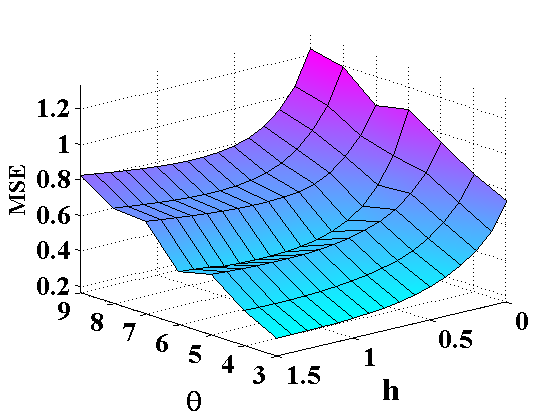}
\label{fig:MPSDE_n15_L-20_a1}}
~ 
\subfloat[$\alpha=0.25$, $\lambda=-2$]{
\includegraphics[width=0.23\textwidth]{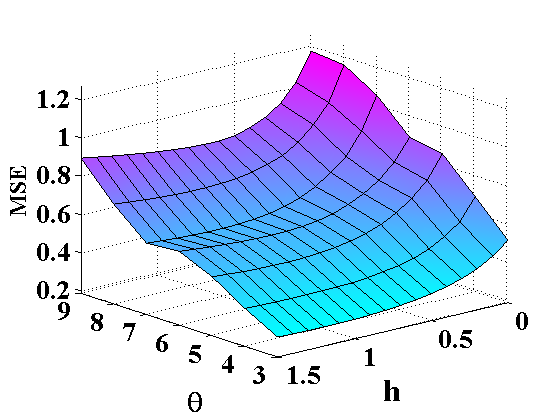}
\label{fig:MPSDE_n15_L-20_a25}}
~ 
\subfloat[$\alpha=0.5$, $\lambda=-2$]{
\includegraphics[width=0.23\textwidth]{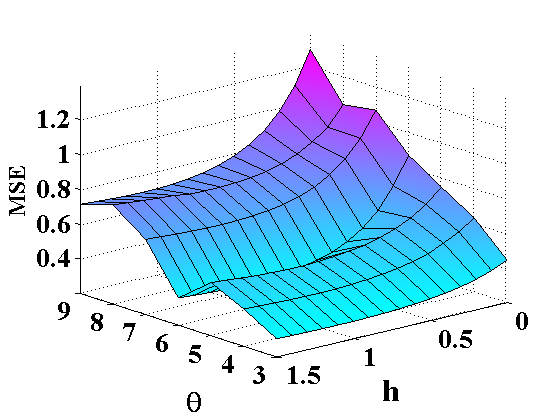}
\label{fig:MPSDE_n15_L-20_a5}}
\caption{MSE of $\widehat{\theta}_{\alpha, \lambda}$ over $\theta$ and $h$ for different $\lambda$ and $\alpha$ at sample size $n=15$. }
\label{FIG:5MSE_MPSDE_n15}
\end{figure}

\begin{figure}
\centering
\subfloat[$\alpha=0$, $\lambda=0$]{
\includegraphics[width=0.23\textwidth]{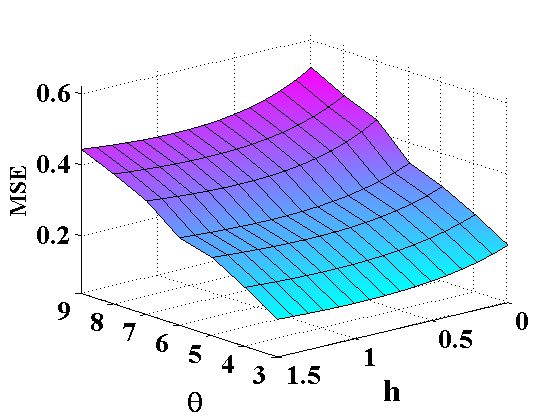}
\label{fig:MPSDE_n20_L0_a0}}
~ 
\subfloat[$\alpha=0.1$, $\lambda=0$]{
\includegraphics[width=0.23\textwidth]{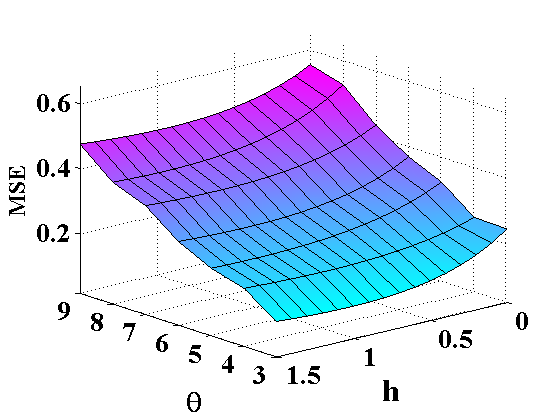}
\label{fig:MPSDE_n20_L0_a1}}
~ 
\subfloat[$\alpha=0.25$, $\lambda=0$]{
\includegraphics[width=0.23\textwidth]{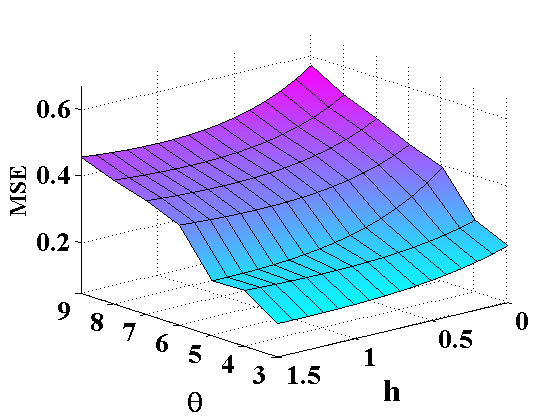}
\label{fig:MPSDE_n20_L0_a25}}
~ 
\subfloat[$\alpha=0.5$, $\lambda=0$]{
\includegraphics[width=0.23\textwidth]{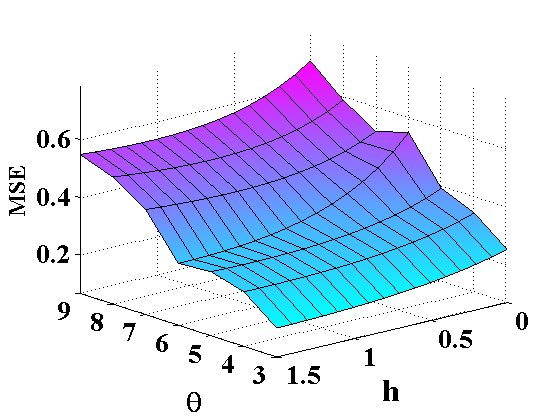}
\label{fig:MPSDE_n20_L0_a5}}
\\ 
\subfloat[$\alpha=0$, $\lambda=-0.5$]{
\includegraphics[width=0.23\textwidth]{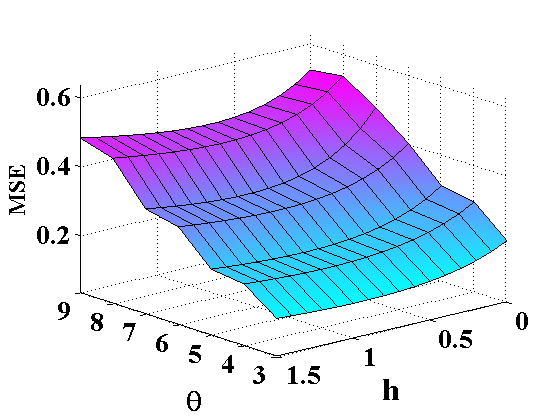}
\label{fig:MPSDE_n20_L-5_a0}}
~ 
\subfloat[$\alpha=0.1$, $\lambda=-0.5$]{
\includegraphics[width=0.23\textwidth]{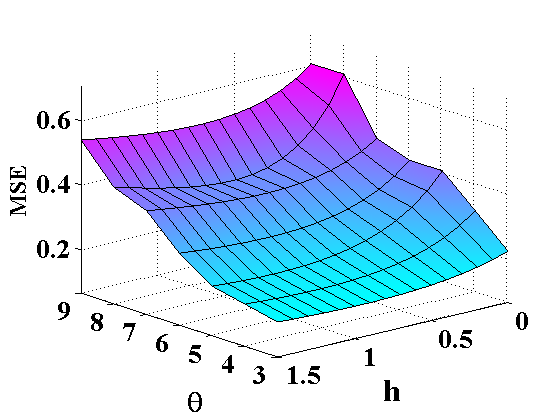}
\label{fig:MPSDE_n20_L-5_a1}}
~ 
\subfloat[$\alpha=0.25$, $\lambda=-0.5$]{
\includegraphics[width=0.23\textwidth]{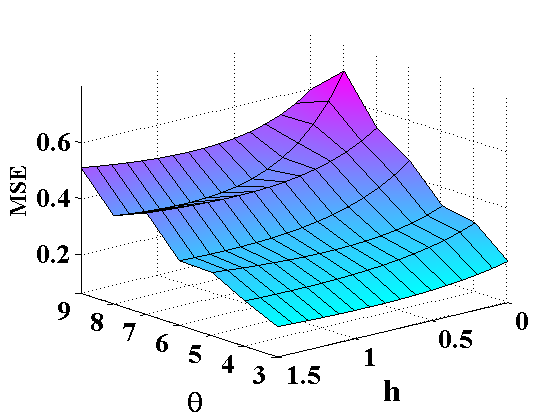}
\label{fig:MPSDE_n20_L-5_a25}}
~ 
\subfloat[$\alpha=0.5$, $\lambda=-0.5$]{
\includegraphics[width=0.23\textwidth]{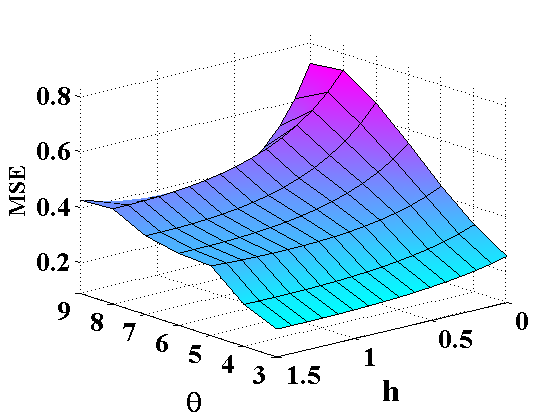}
\label{fig:MPSDE_n20_L-5_a5}}
\\ 
\subfloat[$\alpha=0$, $\lambda=-1$]{
\includegraphics[width=0.23\textwidth]{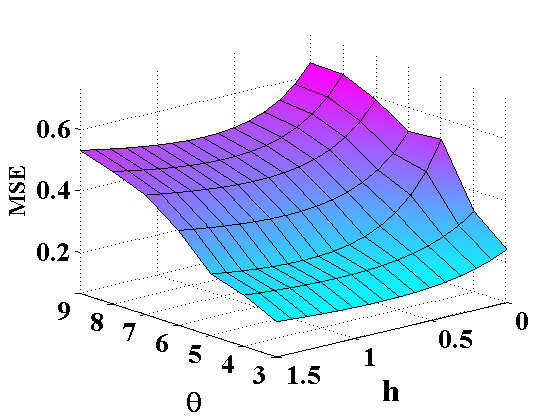}
\label{fig:MPSDE_n20_L-10_a0}}
~ 
\subfloat[$\alpha=0.1$, $\lambda=-1$]{
\includegraphics[width=0.23\textwidth]{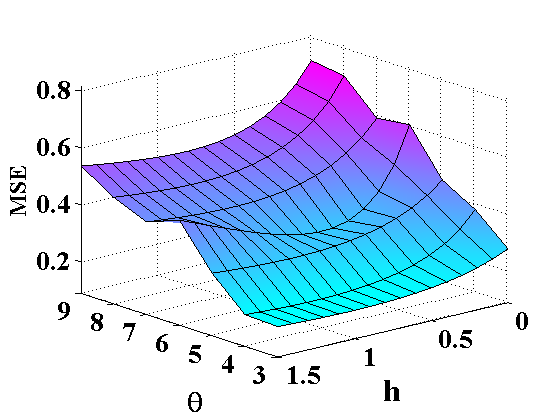}
\label{fig:MPSDE_n20_L-10_a1}}
~ 
\subfloat[$\alpha=0.25$, $\lambda=-1$]{
\includegraphics[width=0.23\textwidth]{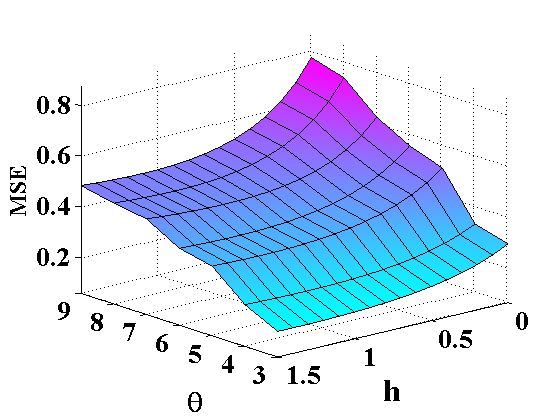}
\label{fig:MPSDE_n20_L-10_a25}}
~ 
\subfloat[$\alpha=0.5$, $\lambda=-1$]{
\includegraphics[width=0.23\textwidth]{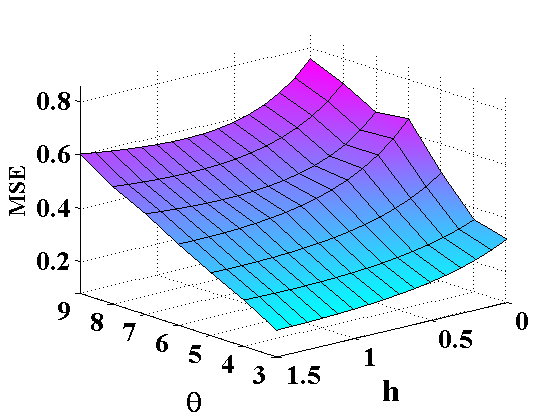}
\label{fig:MPSDE_n20_L-10_a5}}
\\
\subfloat[$\alpha=0$, $\lambda=-1.5$]{
\includegraphics[width=0.23\textwidth]{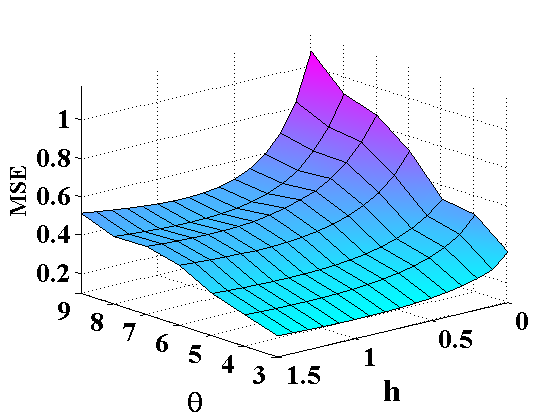}
\label{fig:MPSDE_n20_L-15_a0}}
~ 
\subfloat[$\alpha=0.1$, $\lambda=-1.5$]{
\includegraphics[width=0.23\textwidth]{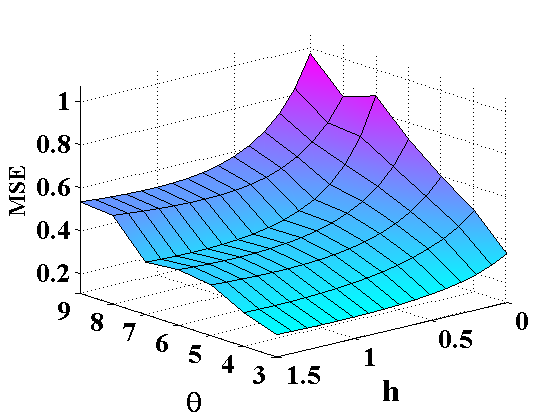}
\label{fig:MPSDE_n20_L-15_a1}}
~ 
\subfloat[$\alpha=0.25$, $\lambda=-1.5$]{
\includegraphics[width=0.23\textwidth]{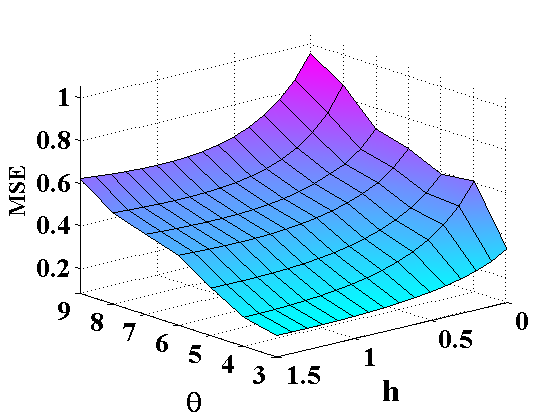}
\label{fig:MPSDE_n20_L-15_a25}}
~ 
\subfloat[$\alpha=0.5$, $\lambda=-1.5$]{
\includegraphics[width=0.23\textwidth]{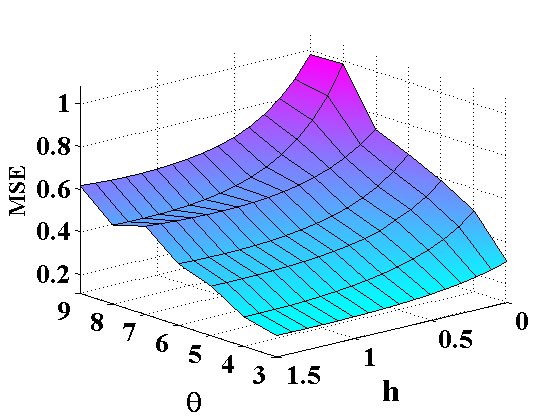}
\label{fig:MPSDE_n20_L-15_a5}}
\\ 
\subfloat[$\alpha=0$, $\lambda=-2$]{
\includegraphics[width=0.23\textwidth]{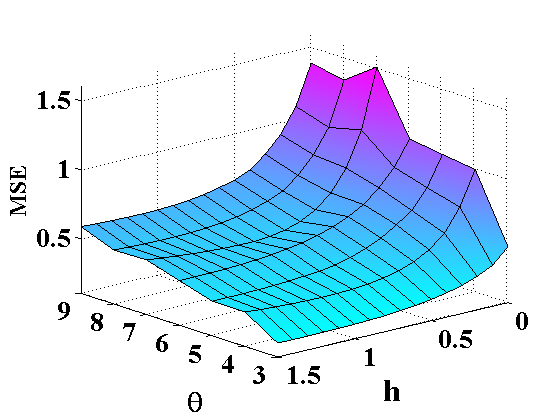}
\label{fig:MPSDE_n20_L-20_a0}}
~ 
\subfloat[$\alpha=0.1$, $\lambda=-2$]{
\includegraphics[width=0.23\textwidth]{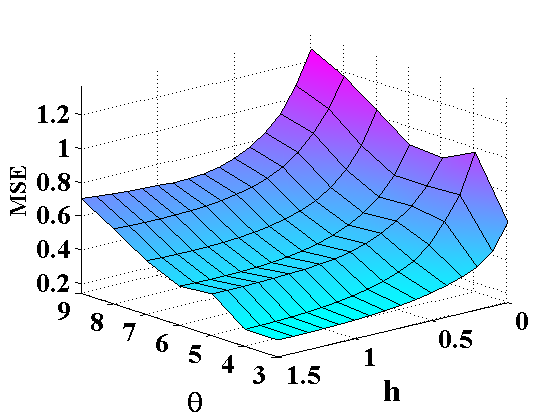}
\label{fig:MPSDE_n20_L-20_a1}}
~ 
\subfloat[$\alpha=0.25$, $\lambda=-2$]{
\includegraphics[width=0.23\textwidth]{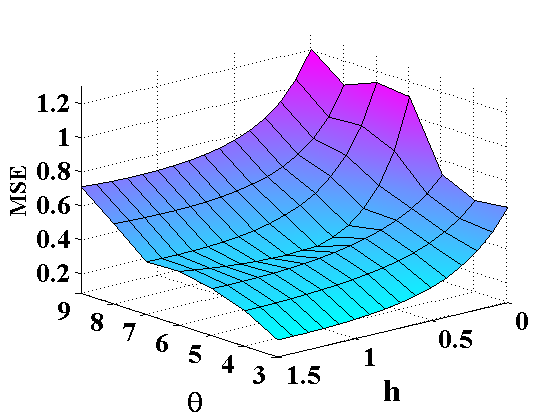}
\label{fig:MPSDE_n20_L-20_a25}}
~ 
\subfloat[$\alpha=0.5$, $\lambda=-2$]{
\includegraphics[width=0.23\textwidth]{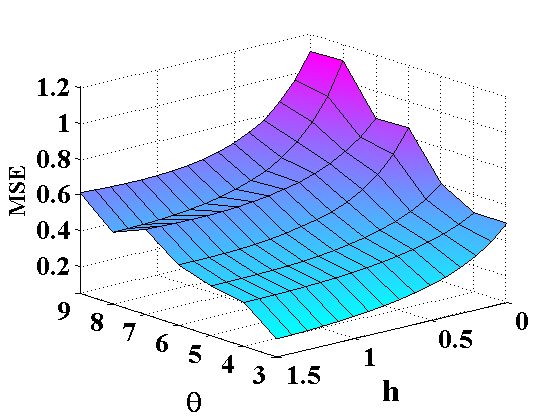}
\label{fig:MPSDE_n20_L-20_a5}}
\caption{MSE of $\widehat{\theta}_{\alpha, \lambda}$ over $\theta$ and $h$ for different $\lambda$ and $\alpha$ at sample size $n=20$. }
\label{FIG:5MSE_MPSDE_n20}
\end{figure}

It is easy to note from the figures that the pattern of the MSE over different values of 
the penalty factor $h$ and Poisson mean $\theta$ generally have a similar nature 
for all the sample sizes $n=10, 15$ and $20$.
In all the cases, the MSE increases with increasing $\theta$ and 
decreases slightly as the sample size increases.
However its behavior for different values of $\lambda$ and $\alpha$ varies significantly
and so does the optimum range for the values of the penalty factor $h$ generating 
the minimum MSE for different members of the $S$-divergence family.
Further, although it is not discernible from these graphs themselves,  
the minimum value of the MSE of the MPSDE (over the different choices of of the penalty factor $h$) 
for any ($\alpha$, $\lambda$) combination is generally much smaller compared to that of the corresponding MSDE. 
This illustrates that a suitably chosen penalized version of the MSDE can largely solve the problem of inliers 
and empty cells in small sample sizes.
And in almost all the situations studied, the optimal penalty factor is in the range $[0, 1.5]$. 
This is the reason why we have restricted our attention to this range for $h$. 
And in the overwhelming majority of the cases, the MSE surface shows very little change over 
$h \in [0.5, 1.5]$ for fixed values of $\alpha$, $\lambda$ and the Poisson parameter. 

\begin{table}[!th]
\centering 
\caption{Optimum values of the penalty factor $h$ for sample size $n=10$ 
and different values of $(\alpha, \lambda)$, along with the true empty cell weight = $\frac{1}{A}$}
\begin{tabular}{lr|c|ccccccc} \hline
$\alpha$	&	$\lambda$	&  $1/A$	& $\theta=3$	&	$\theta=4$	&	$\theta=5$	&	$\theta=6$	&	$\theta=7$	&	$\theta=8$	&	$\theta=9$ \\\hline 
0		&	0		&	1 		& 	0.9	&	0.8	&	0.6	&	0.7 &	0.7	&	0.7	&	0.7			\\
0.1		&	0		&	1 		& 	0.9	&	0.9	&	0.8	&	0.8 &	0.7	& 	0.7 &	0.6			\\
0.25	&	0		&	1		& 	1 	&	1	&	0.7	&	0.8	&	0.6	&	0.5	&	1			\\
0.5		&	0		&	1		&	1	&	0.8	&	0.5	&	0.6	&	0.8	&	0.6	&	1.1	 		\\\hline 
0		&	$-0.5$	&	2 		& 	0.4	&	0.7	&	0.6	&	0.5	&	0.4	&	0.3	&	0.4			\\
0.1		&	$-0.5$	&	1.82	& 	0.7	&	0.3	&	0.6	&	0.6	&	0.5	&	0.3	&	0.6			\\
0.25	&	$-0.5$	&	1.60	&	0.9	&	0.4	&	0.5	&	0.8	&	0.7	&	0.5	&	0.5			\\
0.5		&	$-0.5$	&	1.33	&	0.7	&	1	&	0.9	&	1	&	0.9	&	1.1	&	0.4			\\\hline 
0		&	$-1$	& $\infty$	&	0.6	&	0.5	&	0.4	&	0.4	&	0.4	&	0.4	&	0.4			\\
0.1		&	$-1$	&	10 		& 	0.5	&	0.6	&	0.5	&	0.5	&	0.3	&	0.2	&	0.2			\\
0.25	&	$-1$	&	4 		& 	0.6	&	0.9	&	0.7	&	0.7	&	0.5	&	0.3	&	0.4			\\
0.5		&	$-1$	&	2 		& 	0.9	&	0.7	&	0.6	&	0.9	&	0.8	&	0.5	&	0.4			\\\hline 
0		&	$-1.5$	&	$-2$ 	& 	0.7	&	0.6	&	0.5	&	0.2 &	0.4	&	0.1	&	0.2			\\
0.1		&	$-1.5$	&	$-2.86$	& 	0.4	&	0.6	&	0.5	&	0.1 &	0	&	0	&	0.3			\\
0.25	&	$-1.5$	&	$-8$ 	& 	0.7	&	0.4	&	0.9	&	0.5 &	0.4	&	0	&	0.4			\\
0.5		&	$-1.5$	&	$4$ 	& 	1.2	&	0.8	&	1	&	0.5	&	0.5	&	0.7	&	0.5			\\\hline 
0		&	$-2$	&	$-1$ 	& 	0.8	&	0.7	&	0.5	&	0.3 &	0.2	&	0.2	&	0.1			\\
0.1		&	$-2$	&	$-1.25$	& 	0.8	&	0.7	&	0.6	&	0.4 &	0.4	&	0.3	&	0.2			\\
0.25	&	$-2$	&	$-2$ 	& 	0.9	&	0.8	&	0.8	&	0.6 &	0.5	&	0.4	&	0.3			\\
0.5		&	$-2$	& $-\infty$	& 	0.8	&	0.9	&	0.9	&	0.8	&	1	&	0.6	&	0.4			\\
\hline
\end{tabular}
\label{TAB:opt_h_n10}
\end{table}

\begin{table}[!th]
	\centering 
	\caption{Optimum values of the penalty factor $h$ for sample size $n=15$ 
		and different values of $(\alpha, \lambda)$, along with the true empty cell weight = $\frac{1}{A}$}
	\begin{tabular}{lr|c|ccccccc} \hline
		$\alpha$	&	$\lambda$	&  $1/A$	& $\theta=3$	&	$\theta=4$	&	$\theta=5$	&	$\theta=6$	&	$\theta=7$	&	$\theta=8$	&	$\theta=9$	\\\hline 
		0		&	0		&	1 		& 	0.8	&	1	&	0.9	&	1.1 &	0.8	&	0.8	&	0.8			\\
		0.1		&	0		&	1 		& 	0.8	&	0.9	&	0.9	&	0.9 &	0.8	& 	0.9 &	0.9			\\
		0.25	&	0		&	1		& 	1.3 &	1.2	&	1.1	&	1	&	0.7	&	0.6	&	0.9			\\
		0.5		&	0		&	1		&	1.2	&	0.9	&	1.1	&	1.4	&	1.5	&	0.9	&	0.9	 		\\\hline 
		0		&	$-0.5$	&	2 		& 	0.9	&	0.8	&	0.6	&	0.7	&	0.7	&	0	&	0.6			\\
		0.1		&	$-0.5$	&	1.82	& 	0.8	&	0.9	&	0.8	&	0.8	&	0.9	&	0.9	&	0.8			\\
		0.25	&	$-0.5$	&	1.60	&	0.9	&	1	&	0.8	&	1.1	&	1	&	1	&	0.8			\\
		0.5		&	$-0.5$	&	1.33	&	1.5	&	1.5	&	1	&	0.9	&	1.1	&	1.3	&	0.9			\\\hline 
		0		&	$-1$	& $\infty$	&	0.7	&	0.9	&	0.8	&	0.7	&	0.6	&	0.5	&	0.6			\\
		0.1		&	$-1$	&	10 		& 	0.9	&	0.8	&	0.8	&	0.7	&	0.7	&	0.8	&	0.8			\\
		0.25	&	$-1$	&	4 		& 	0.6	&	1	&	0.8	&	1	&	1	&	0.7	&	0.6			\\
		0.5		&	$-1$	&	2 		& 	1.2	&	1.4	&	1.3	&	0.9	&	0.9	&	1	&	0.8			\\\hline 
		0		&	$-1.5$	&	$-2$ 	& 	1	&	0.9	&	0.9	&	0.7 &	0.4	&	0.6	&	0.7			\\
		0.1		&	$-1.5$	&	$-2.86$	& 	1	&	1	&	1	&	0.7 &	0.8	&	0.7	&	0.7			\\
		0.25	&	$-1.5$	&	$-8$ 	& 	1.3	&	1	&	1.1	&	0.9 &	0.9	&	0.8	&	0.7			\\
		0.5		&	$-1.5$	&	$4$ 	& 	1.4	&	1.1	&	1	&	1	&	1.1	&	1.1	&	0.7			\\\hline 
		0		&	$-2$	&	$-1$ 	& 	1.3	&	1	&	1.1	&	0.9 &	0.9	&	0.7	&	0.6			\\
		0.1		&	$-2$	&	$-1.25$	& 	1.5	&	1.3 &	1.1	&	1 	&	0.8	&	0.9	&	0.8			\\
		0.25	&	$-2$	&	$-2$ 	& 	1.2	&	1.2	&	1	&	0.9 &	1.1	&	1	&	0.6			\\
		0.5		&	$-2$	& $-\infty$	& 	1.1	&	1	&	1	&	1.1	&	1	&	0.9	&	1.1			\\
		\hline
	\end{tabular}
	\label{TAB:opt_h_n15}
\end{table}

\begin{table}[!th]
	\centering 
	\caption{Optimum values of the penalty factor $h$ for sample size $n=20$ 
		and different values of $(\alpha, \lambda)$, along with the true empty cell weight = $\frac{1}{A}$}
	\begin{tabular}{lr|c|ccccccc} \hline
		$\alpha$	&	$\lambda$	&  $1/A$	& $\theta=3$	&	$\theta=4$	&	$\theta=5$	&	$\theta=6$	&	$\theta=7$	&	$\theta=8$	&	$\theta=9$	 \\\hline 
		0		&	0		&	1 		& 	1	&	1	&	0.9	&	0.9 &	0.9	&	0.9	&	0.9			\\
		0.1		&	0		&	1 		& 	1	&	1	&	1.2	&	1.1 &	1	& 	1.2 &	0.9			\\
		0.25	&	0		&	1		& 	1.4 &	1	&	1.4	&	1	&	1	&	1	&	1			\\
		0.5		&	0		&	1		&	1.5	&	1.4	&	1.3	&	1.5	&	1	&	0.9	&	1.1	 		\\\hline 
		0		&	$-0.5$	&	2 		& 	1	&	1	&	0.9	&	0.9	&	0.9	&	0.8	&	0.7			\\
		0.1		&	$-0.5$	&	1.82	& 	1	&	1	&	1.1	&	0.9	&	0.7	&	0.7	&	0.7			\\
		0.25	&	$-0.5$	&	1.60	&	0.9	&	1.2	&	0.8	&	1.2	&	0.9	&	1	&	0.8			\\
		0.5		&	$-0.5$	&	1.33	&	1.4	&	1.3	&	1.1	&	1.4	&	1.5	&	1.5	&	0.8			\\\hline 
		0		&	$-1$	& $\infty$	&	0.9	&	0.9	&	0.9	&	0.8	&	0.8	&	0.8	&	0.7			\\
		0.1		&	$-1$	&	10 		& 	1	&	1.1	&	1	&	0.6	&	0.9	&	1	&	0.8			\\
		0.25	&	$-1$	&	4 		& 	1.4	&	1.1	&	1.2	&	1.2	&	1	&	1.1	&	1.1			\\
		0.5		&	$-1$	&	2 		& 	1.5	&	1.3	&	1.4	&	1.4	&	1	&	1	&	0.8			\\\hline 
		0		&	$-1.5$	&	$-2$ 	& 	1	&	1.1	&	1	&	1 	&	1.1	&	1	&	0.8			\\
		0.1		&	$-1.5$	&	$-2.86$	& 	1.1	&	1.3	&	1.1	&	1.1 &	1.2	&	0.9	&	1			\\
		0.25	&	$-1.5$	&	$-8$ 	& 	1.4	&	1.5	&	1.3	&	1.1 &	1.1	&	1.1	&	1			\\
		0.5		&	$-1.5$	&	$4$ 	& 	1.3	&	1.5	&	1.4	&	1.3	&	1	&	1.3	&	1.2			\\\hline 
		0		&	$-2$	&	$-1$ 	& 	1.4	&	1	&	1.4	&	1.5 &	0.9	&	1.2	&	1.1			\\
		0.1		&	$-2$	&	$-1.25$	& 	1.4	&	1.5	&	1.5	&	1.3 &	1.2	&	1.2	&	0.9			\\
		0.25	&	$-2$	&	$-2$ 	& 	1.5	&	1.5	&	1.5	&	0.8 &	1.3	&	1.2	&	1.2			\\
		0.5		&	$-2$	& $-\infty$	& 	1.5	&	1.5	&	1.5	&	1.4	&	1.2	&	1.5	&	1.2			\\
		\hline
	\end{tabular}
	\label{TAB:opt_h_n20}
\end{table}

From the figures it is clear that  the optimal choice of the penalty factor $h$ 
depends on the tuning parameters $(\alpha, \lambda)$, and possibly also on the sample size and the Poisson parameter $\theta$. 
In Tables \ref{TAB:opt_h_n10}--\ref{TAB:opt_h_n20}, we have presented the optimum values of the 
penalty factor $h$ for different $\theta$, $\alpha$ and  $\lambda$ for $n=10$, $15$ and $20$, 
along with the true empty cell weight $1/A$. 
In the following we give a structured description of what we observe in the tables. 
\begin{itemize}
	
	\item For the maximum likelihood estimator (corresponding to $\alpha = 0$, $\lambda = 0$), 
	the impact of the penalty factor is not substantial, and the optimal $h$ is usually close 
	to the natural factor $1/A$ (equal to 1 in this case). 
	
	\item The findings are similar for the $(\alpha > 0, \lambda = 0)$ cases, 
	although larger $\alpha$ and $n$ usually require slightly higher values of the penalty factor. 
	
	\item For the Hellinger distance $(\alpha = 0, \lambda = -0.5)$ there is significant improvement 
	in the MSE due to the imposition of the penalty. The optimum $h$ varies between 0.5 and 0.1 (the natural value is 2). 
	Although we have not presented the corresponding values in our tables or figures, 
	the improvement is even more spectacular for larger (in absolute magnitude) negative values of $\lambda$. 
	
	\item A pattern similar to the one described in the last item is observed for $\alpha > 0$ and $\lambda = -0.5$, 
	although the range of optimum values shifts slightly to larger values of $h$ as $n$ increases.

	\item In general, smaller sample sizes require greater shrinking of the penalty factor towards zero. 
	Very small sample sizes (such as 10), require penalty factors closer to 0.5, 
	while with moderate to larger values a factor closer to 1 (or slightly higher) is more appropriate. 
	
	\item In practically all the cases that we have investigated, 
	the optimal penalty factor is smaller (in absolute magnitude) than the natural penalty weight. 
	
	\item As the true mean parameter increases, the optimal penalty factor needed is, in general, smaller. 
	This is because in this case there are a larger number of possibly inlying cells with non-negligible probabilities. 
\end{itemize}

Thus, within the limitation of the simulation scheme, we have given a detailed description of 
how the penalty weight appears to improve the performance of these estimators. 
The optimal choice does depend on the parameter $\theta$, the sample size $n$ and the tuning parameter $(\alpha, \lambda)$. 
A completely automatic, case specific recommendation for the penalty factor to choose 
in a given situation may require additional research. However, as an overall recommendation, 
we observe that in general $h \in [0.5, 1]$ works well in practically all situations, 
with the specific choice being guided, to the extent possible, by the discussion above.

\begin{figure}
	\centering
	\subfloat[$n=10$, $\lambda=0$, $h^*=0.5$]{
		\includegraphics[width=0.23\textwidth]{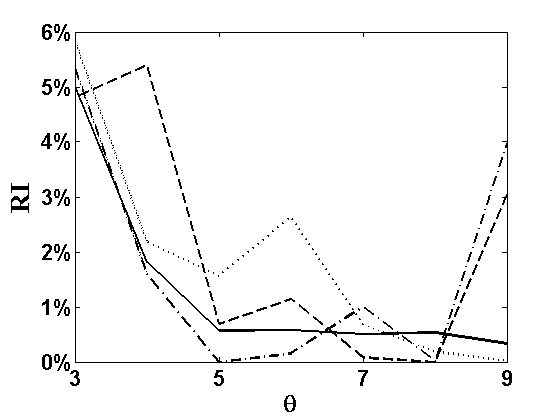}
		\label{fig:RI_n10_h05_l0}}
	~ 
	\subfloat[$n=10$, $\lambda=0$, $h^*=1$]{
		\includegraphics[width=0.23\textwidth]{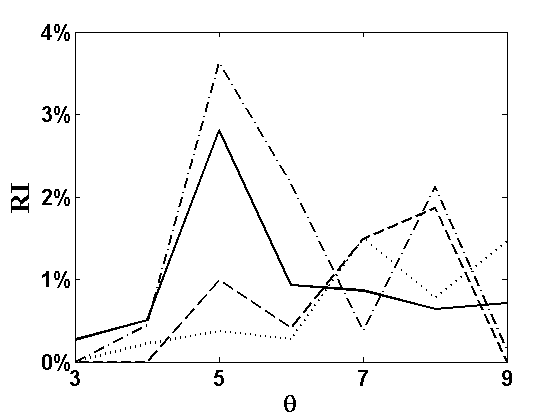}
		\label{fig:RI_n10_h10_l0}}
	~ 
	\subfloat[$n=20$, $\lambda=0$, $h^*=0.5$]{
	\includegraphics[width=0.23\textwidth]{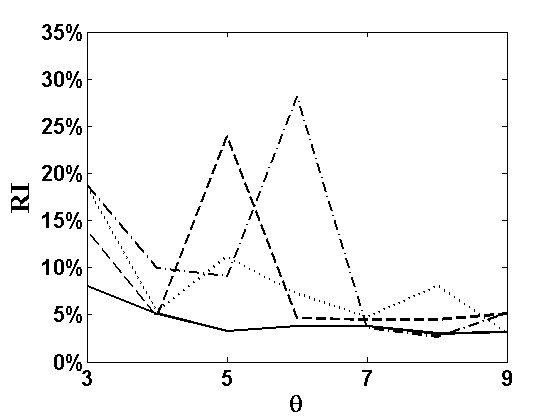}
	\label{fig:RI_n20_h05_l0}}
	~ 
	\subfloat[$n=20$, $\lambda=0$, $h^*=1$]{
	\includegraphics[width=0.23\textwidth]{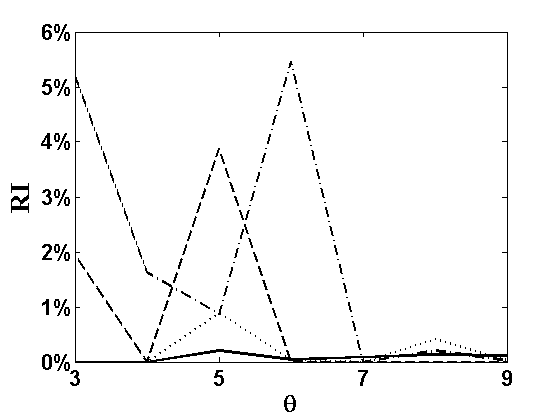}
	\label{fig:RI_n20_h10_l0}}
	\\ 
\subfloat[$n=10$, $\lambda=-0.5$, $h^*=0.5$]{
	\includegraphics[width=0.23\textwidth]{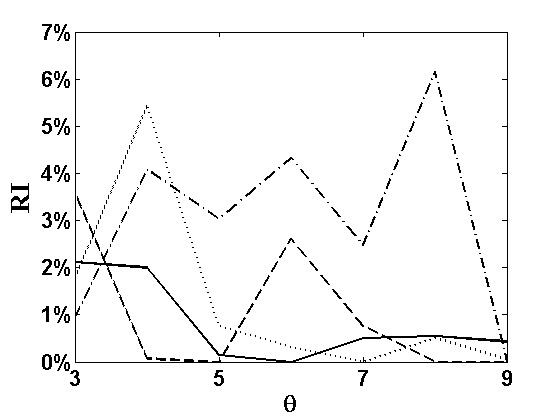}
	\label{fig:RI_n10_h05_l-5}}
~ 
\subfloat[$n=10$, $\lambda=-0.5$, $h^*=1$]{
	\includegraphics[width=0.23\textwidth]{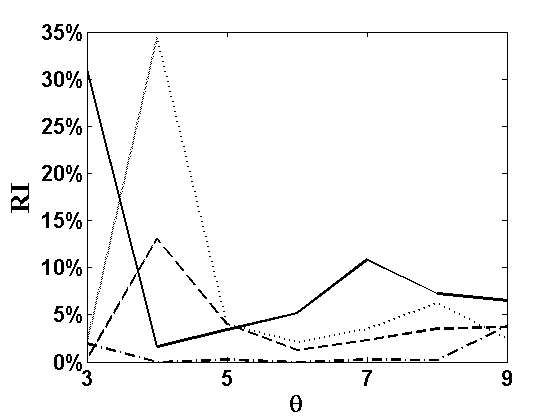}
	\label{fig:RI_n10_h10_l-5}}
~ 
\subfloat[$n=20$, $\lambda=-0.5$, $h^*=0.5$]{
	\includegraphics[width=0.23\textwidth]{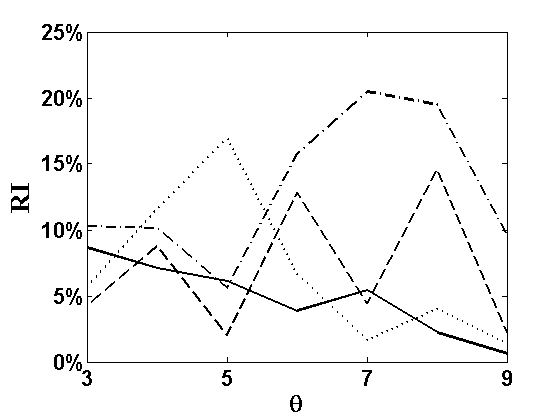}
	\label{fig:RI_n20_h05_l-5}}
~ 
\subfloat[$n=20$, $\lambda=-0.5$, $h^*=1$]{
	\includegraphics[width=0.23\textwidth]{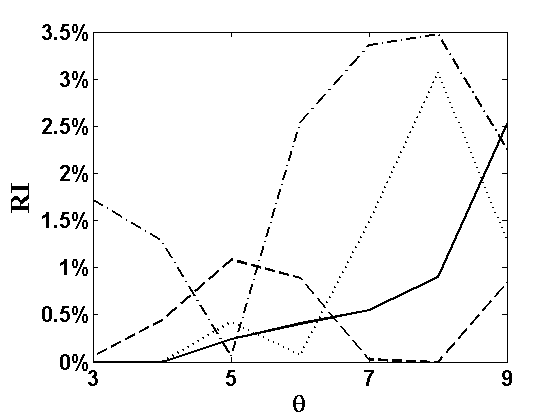}
	\label{fig:RI_n20_h10_l-5}}
	\\ 
\subfloat[$n=10$, $\lambda=-1$, $h^*=0.5$]{
	\includegraphics[width=0.23\textwidth]{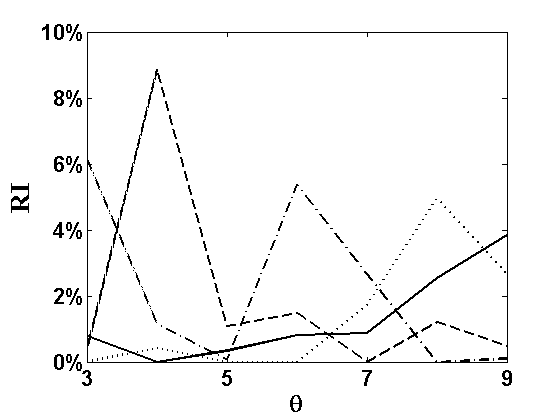}
	\label{fig:RI_n10_h05_l-10}}
~ 
\subfloat[$n=10$, $\lambda=-1$, $h^*=1$]{
	\includegraphics[width=0.23\textwidth]{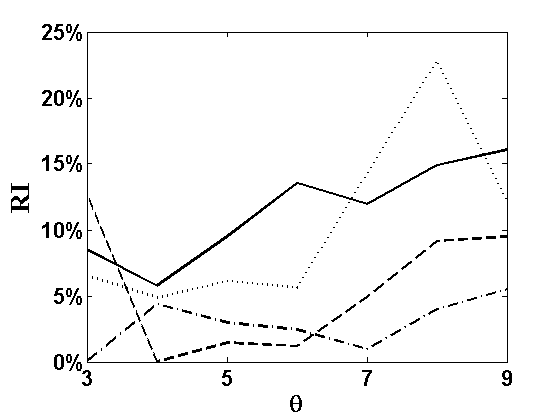}
	\label{fig:RI_n10_h10_l-10}}
~ 
\subfloat[$n=20$, $\lambda=-1$, $h^*=0.5$]{
	\includegraphics[width=0.23\textwidth]{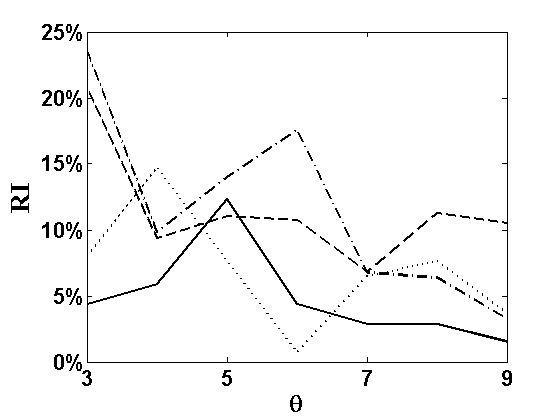}
	\label{fig:RI_n20_h05_l-10}}
~ 
\subfloat[$n=20$, $\lambda=-1$, $h^*=1$]{
	\includegraphics[width=0.23\textwidth]{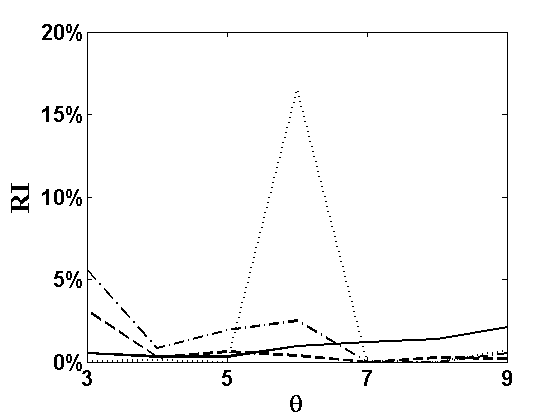}
	\label{fig:RI_n20_h10_l-10}}
	\\
\subfloat[$n=10$, $\lambda=-1.5$, $h^*=0.5$]{
	\includegraphics[width=0.23\textwidth]{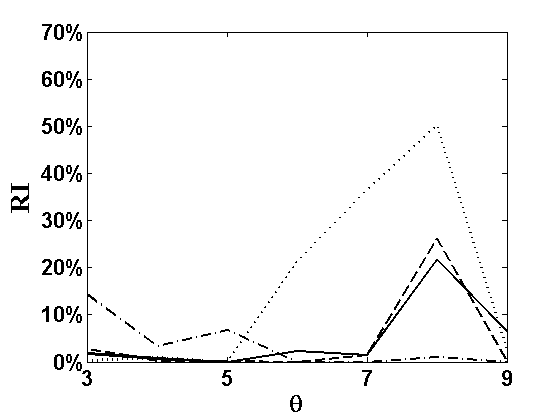}
	\label{fig:RI_n10_h05_l-15}}
~ 
\subfloat[$n=10$, $\lambda=-1.5$, $h^*=1$]{
	\includegraphics[width=0.23\textwidth]{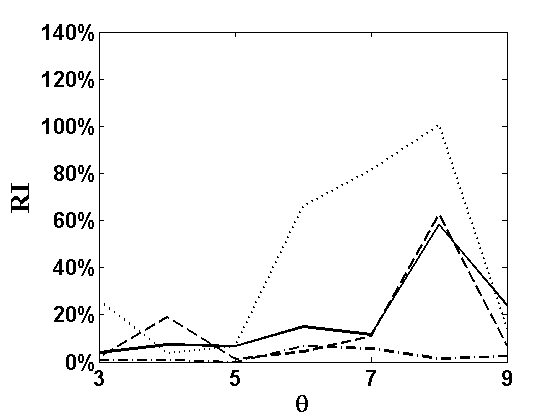}
	\label{fig:RI_n10_h10_l-15}}
~ 
\subfloat[$n=20$, $\lambda=-1.5$, $h^*=0.5$]{
	\includegraphics[width=0.23\textwidth]{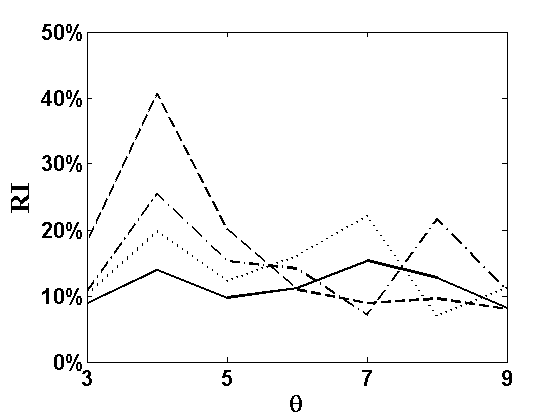}
	\label{fig:RI_n20_h05_l-15}}
~ 
\subfloat[$n=20$, $\lambda=-1.5$, $h^*=1$]{
	\includegraphics[width=0.23\textwidth]{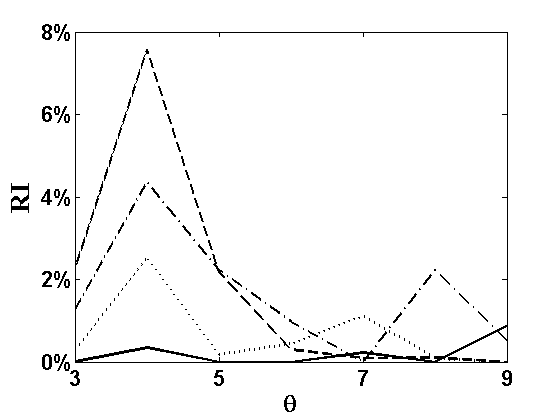}
	\label{fig:RI_n20_h10_l-15}}
	\\ 
\subfloat[$n=10$, $\lambda=-2$, $h^*=0.5$]{
	\includegraphics[width=0.23\textwidth]{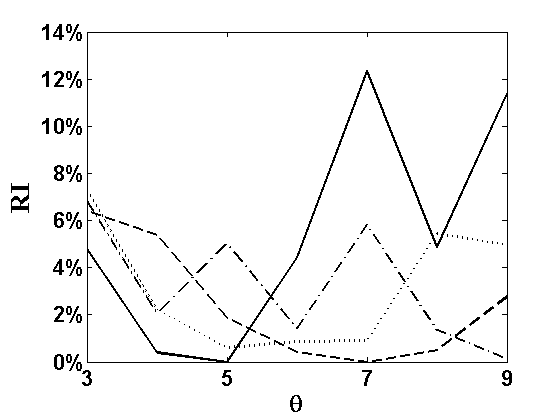}
	\label{fig:RI_n10_h05_l-20}}
~ 
\subfloat[$n=10$, $\lambda=-2$, $h^*=1$]{
	\includegraphics[width=0.23\textwidth]{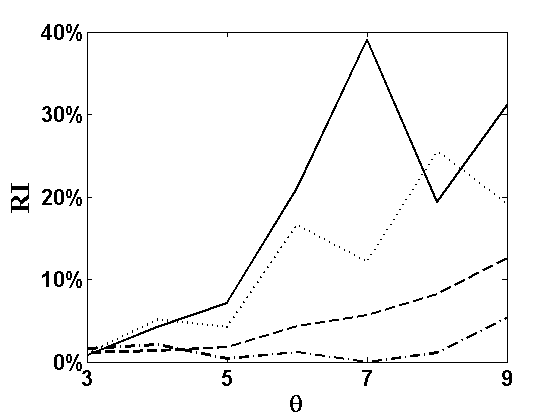}
	\label{fig:RI_n10_h10_l-20}}
~ 
\subfloat[$n=20$, $\lambda=-2$, $h^*=0.5$]{
	\includegraphics[width=0.23\textwidth]{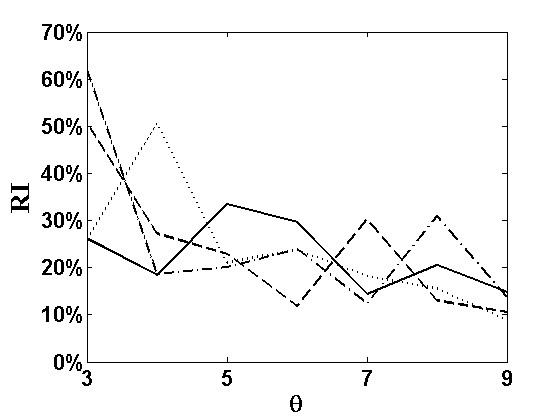}
	\label{fig:RI_n20_h05_l-20}}
~ 
\subfloat[$n=20$, $\lambda=-2$, $h^*=1$]{
	\includegraphics[width=0.23\textwidth]{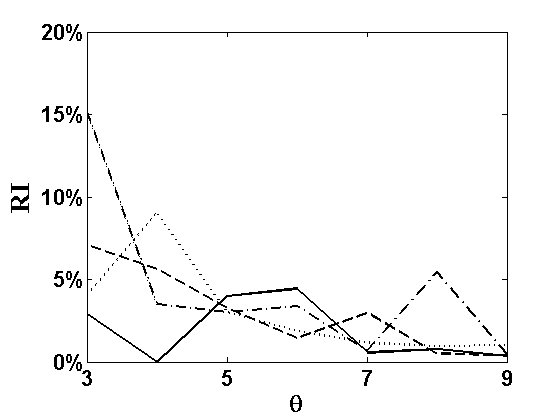}
	\label{fig:RI_n20_h10_l-20}}
	\caption{Relative increase (RI) in MSE due to use of the simple choice $h^*$ over the optimum choice 
	for different $\lambda$ and $\alpha$ at sample sizes $n=10, 20$ 
	(solid line: $\alpha=0$, dotted line: $\alpha=0.1$, dashed line: $\alpha=0.25$ and dash-dotted line: $\alpha=0.5$). }
	\label{FIG:5RI_MPSDE_n1020}
\end{figure}

As we have already observed, the MSE of the MPSDE does not vary appreciably in the interval $[0.5, 1.5]$, 
and $h =0.5$ or $h = 1$ is often a ``close to optimal" choice. 
Based on this observation, we next study the relative increase in MSE (and hence loss in efficiency) 
of the estimators for each of the cases considered previously for the simpler choices $h^*=0.5$ or $1$
instead of the optimum $h$, say $h_{opt}$; we define this measure as
$$
\mbox{ RI} = \frac{\mbox{MSE}(h^*) - \mbox{MSE}(h_{opt})}{\mbox{MSE}(h_{opt})}.
$$
Figure \ref{FIG:5RI_MPSDE_n1020} plots these measures of relative increase in MSE for the sample sizes $n=10$ and $20$;
the results for $n=15$ are similar with MSE values in between these two and hence omitted to save space.
It can be clearly observed from the figure that, when $\lambda=0$, 
the natural weight 1 ($=1/A$) in fact does not lead to an RI of more than $5\%$; 
note that this is somewhat expected since the inlier issue is not very serious for $\lambda=0$ 
and is more important for $\lambda<0$ (see Figure  \ref{FIG:5MSE_MSDE}).
Further, for such estimators with $\lambda<0$, in most cases,
we generally do not have more than $10\%$ relative increase in MSE 
while using $h^* = 0.5$ for $n=10$ and  using $h^* = 1$ for $n=20$.
These choices, therefore, give the practitioners a guidance 
on a simple primary application which can be refined at a later stage 
using more detailed exploration of the role of $h$. 
The cases where we have a larger percentage of relative increase correspond 
to very small values of the mean square error, both penalized and ordinary.

\begin{table}[!th]
	\centering 
	\caption{Parameter estimates for the Drosophila data with different methods, 
		along with the true empty cell weight = $\frac{1}{A}$}
	\begin{tabular}{lr|c|ccc} \hline
		$\lambda$	& $\alpha$	&  $1/A$	& MSDE$^{(a)}$	&	MPSDE($h=0.5$)	&	MPSDE($h=1$)	 \\\hline 
		0	&	0	&	1	&	3.0588	&	3.3873	&	3.0588	\\	
		0	&	0.1	&	1	&	0.3917	&	0.3998	&	0.3917	\\	
		0	&	0.25	&	1	&	0.3858	&	0.3905	&	0.3858	\\	
		0	&	0.5	&	1	&	0.3747	&	0.3763	&	0.3747	\\	\hline
		$-$0.5	&	0	&	2	&	0.3637	&	0.3945	&	0.3829	\\	
		$-$0.5	&	0.1	&	1.82	&	0.3704	&	0.3902	&	0.3820	\\	
		$-$0.5	&	0.25	&	1.60	&	0.3732	&	0.3829	&	0.3783	\\	
		$-$0.5	&	0.5	&	1.33	&	0.3696	&	0.3723	&	0.3707	\\	\hline
		$-$1	&	0	&	$\infty$	&	--	&	0.3831	&	0.3714	\\	
		$-$1	&	0.1	&	10	&	0.2955	&	0.3803	&	0.3722	\\	
		$-$1	&	0.25	&	4	&	0.3491	&	0.3754	&	0.3709	\\	
		$-$1	&	0.5	&	2	&	0.3638	&	0.3684	&	0.3668	\\	\hline
		$-$1.5	&	0	&	$-$2	&	--	&	0.3716	&	0.3601	\\	
		$-$1.5	&	0.1	&	$-$2.86	&	--	&	0.3706	&	0.3627	\\	
		$-$1.5	&	0.25	&	$-$8	&	--	&	0.3681	&	0.3638	\\	
		$-$1.5	&	0.5	&	4	&	0.3549	&	0.3647	&	0.3632	\\	\hline
		$-$2	&	0	&	$-$1	&	--	&	0.3608	&	0.3498	\\	
		$-$2	&	0.1	&	$-$1.25	&	--	&	0.3615	&	0.3540	\\	
		$-$2	&	0.25	&	$-$2	&	--	&	0.3614	&	0.3573	\\	
		$-$2	&	0.5	&	$\infty$	&	--	&	0.3613	&	0.3598	\\	\hline
		
	\end{tabular}\\
\small{$^{(a)}$ It is `--' when the $S$-divergence is not defined due to the presence of empty cells}
	\label{TAB:real_data}
\end{table}

\section{A Real Data Example}\label{SEC:examples}

we now present a real life application of the proposed minimum penalized $S$-divergence estimators.
We consider a segment of the Drosophila data \citep{Woodruff/etc:1984} based on a chemical mutagenicity experiment. 
The dataset contains the number of daughters carrying a recessive lethal mutation on their X chromosome 
among (roughly) 100 sampled daughter flies from each male Drosophila fly 
when exposed to a certain level of a chemical and mated with unexposed female flies 
in a particular (on day 177) experimental run.
The observed frequencies of the male flies are $\boldsymbol{r}_n = (23, 7, 3, 1)$ having 
$x = (0, 1, 2, 91)$ recessive lethal daughters; all other values of $x$ has frequency zero.
Clearly there is one large outlier in the data (at 91) and plenty of empty cells.
The dataset can be modeled nicely by a Poisson model except for the outlying point,
as described in \cite{Simpson:1987}, \cite{Basu/etc:2011} and \cite{Ghosh:2013}. 
The last paper presented the minimum $S$-divergence estimators for the Poisson mean parameter 
with these data both with and without the outlier, 
where it was observed that the $S$-divergences with negative $\lambda$ yield robust estimators. 
But some of these MSDEs (including the Hellinger distance) are substantially affected by the presence of empty cells
in the data and hence differ significantly from the outlier deleted MLE (which is 0.3939).
In fact, many robust members of the $S$-divergence family with $\lambda<-1$ 
are not finitely defined for this dataset due to the presence of empty cells and 
hence the corresponding estimators cannot be obtained.

However, we can obtain the proposed minimum penalized $S$-divergence estimators
of the Poisson parameter $\theta$ for this dataset at any value of the tuning parameters $(\alpha,\lambda)$.
These MPSDEs are reported in Table \ref{TAB:real_data} for the suggested simple choices of $h=0.5,1$,
along with the corresponding MSDE whenever defined.
In terms of the matching of the observed and expected data (excluding the outliers), 
the estimates in the (0.38, 0.39) window appear to perform the best. 
However, the estimators with natural penalty weight are sometimes shifted by a large
amount from this region due to the empty cell effect. A case in point is the $(\lambda=-1, \alpha=0.1)$ combination,
where the natural estimator is drastically affected by the empty cells, but the penalties put them in the desired zone.
An even stronger effect of this phenomenon (not presented in Table \ref{TAB:real_data}) 
is for the $(\lambda=-0.9, \alpha=0)$ combination.

\section{Conclusions and Discussions}\label{SEC:discussion}

Many minimum divergence estimators, including those within the class of disparities 
and the class of S-divergences, have excellent robustness properties, 
but are often handicapped in small samples due to their poor inlier controlling properties 
which may lead to substantially degraded model performance. 
An extreme form of inliers are the empty cells, and suitable empty cell corrections are useful
in improving this small sample model performance. 
In the tradition of research on this topic, 
we believe that we have made some significant additions to the literature. 
Our achievements and recommendations are listed below.

\begin{itemize}
	\item The divergences which have natural empty cell penalty factor equal to $\infty$ 
	cannot be ordinarily defined in an infinite sample space. 
	But with our penalized scheme there is no problem with their construction; 
	our approach also fills in the theoretical convergence and distributional properties of such estimators, 
	so far unavailable in the literature.  
	
	\item Our approach generalizes the inlier controlling strategy beyond the class of disparities.  
	
	\item We have provided actual (simulation based) figures of the  optimal penalty factor 
	for different values of the Poisson mean parameter $\theta$, different $(\alpha, \lambda)$ combinations, 
	and a few small to moderate sample sizes. While this is fairly detailed, a completely automatic, 
	case specific recommendation for the penalty factor to choose in a given situation may require additional research. 
	However, as an overall recommendation, we observe that in general $h \in [0.5, 1]$ works well in practically all situations. 
\end{itemize}

\begin{figure}[!b]
	\centering
	\subfloat[$n=10$, $\theta=3$]{
		\includegraphics[width=0.33\textwidth]{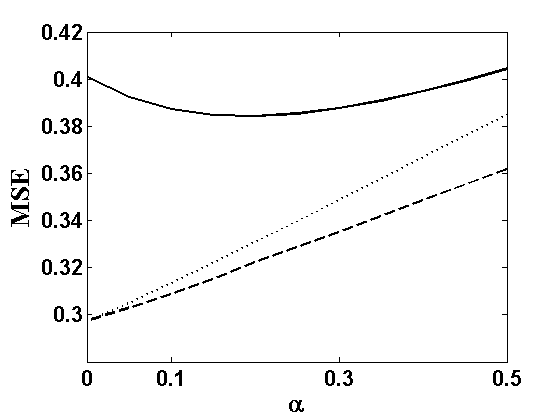}
		\label{fig:opt_MSE_n10_theta3_l-05}}
	~ 
	\subfloat[$n=10$, $\theta=5$]{
		\includegraphics[width=0.33\textwidth]{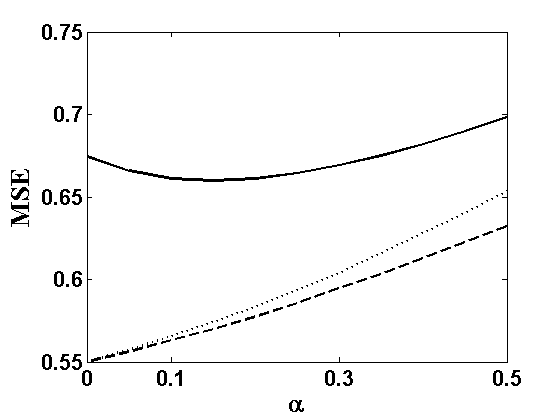}
		\label{fig:RI_n10_h10_l0}}
	~ 
	\subfloat[$n=10$, , $\theta=9$]{
		\includegraphics[width=0.33\textwidth]{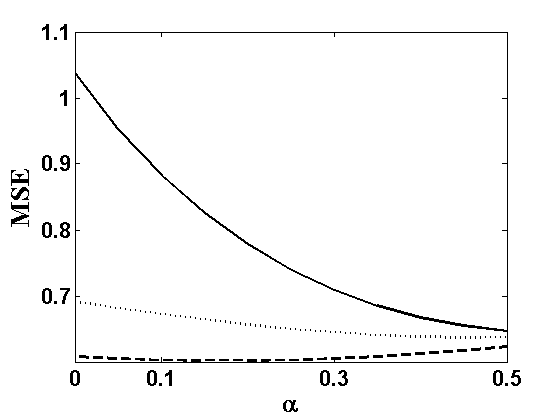}
		\label{fig:RI_n10_h10_l0}}
	\\		
	\subfloat[$n=20$, $\theta=3$]{
		\includegraphics[width=0.33\textwidth]{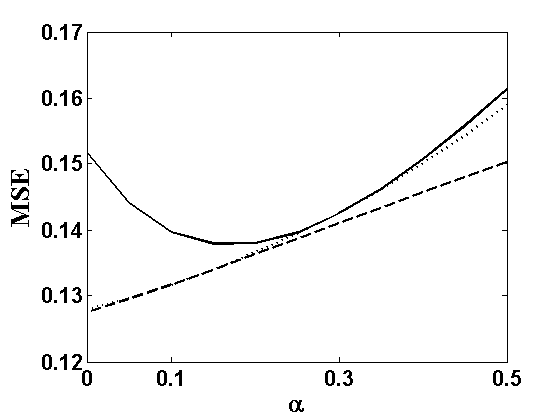}
		\label{fig:opt_MSE_n10_theta3_l-05}}
	~ 
	\subfloat[$n=20$, $\theta=5$]{
		\includegraphics[width=0.33\textwidth]{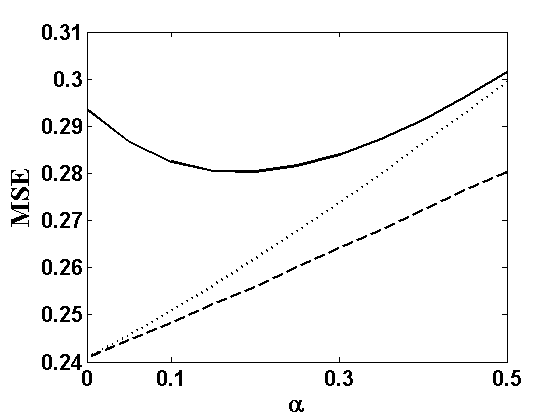}
		\label{fig:RI_n10_h10_l0}}
	~ 
	\subfloat[$n=20$, , $\theta=9$]{
		\includegraphics[width=0.33\textwidth]{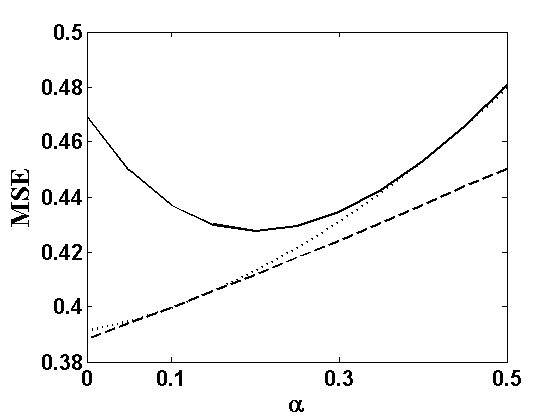}
		\label{fig:RI_n10_h10_l0}}
	\caption{Empirical MSEs of the MPSDEs with $\lambda=-0.5$ over $\alpha$,  
		with the natural empty cell weight $h=1/A$ (solid line), 
		with the optimum $h$ obtained under the formulation of (\ref{EQ:PSD}) (dotted line)
		and with the optimum $h$ and $\beta$ obtained under the formulation of (\ref{EQ:PSD2}) (dashed line)
		for different sample size $n$ and model parameters $\theta$. }
	\label{FIG:new_beta}
\end{figure}

We will end the paper with the mention of another possible extension of the definition of the PSD. 
In the definition (\ref{EQ:PSD}), it is possible to add one more tuning parameter $\beta$ 
to replace the parameter $\alpha$  in the term corresponding to the empty cells,
so that a modified penalized $S$-divergence measure can be defined as 
\begin{eqnarray}
PSD_{(\alpha, \lambda)}^{h,\beta}(\mathbf{r}_n,\mathbf{f}_\theta) &=& ~ \sum_{x:r_n(x)\ne 0} ~ 
\left[ \frac{1}{A} f_\theta^{1+\alpha}(x)  -   \frac{1+\alpha}{A B} ~ f_\theta^{B}(x) r_n^{A}(x)  
+ \frac{1}{B} ~ r_n^{1+\alpha}(x) \right] \nonumber\\
&& ~~  +  h \sum_{x:r_n(x)=0} f_\theta^{1+\beta}(x). 
\label{EQ:PSD2}
\end{eqnarray}
One can again define the MPSDE based on this new definition of the Penalized S-Divergence as given in Equation (\ref{EQ:PSD2}) 
of the PSD and it will follow, along the same lines of the proof given in Section \ref{SEC:5MPSDE_asymp_discrete},
that these modified MPSDEs also have the same asymptotic properties as the previous version 
given in Theorem \ref{THM:5discrete_asymp_MPSDE}. As the intuitive motivation suggests,
one may possibly achieve a better inlier control by varying both $h$ and $\beta$ simultaneously 
(for any given divergence with fixed $\alpha$ and $\lambda$) generating estimators with even smaller MSEs.
For a brief illustration, in Figure \ref{FIG:new_beta},
we have plotted the resulting optimum MSE (minimum MSE over both $h$ and $\beta$)
along with the optimum MSE obtained for definition (\ref{EQ:PSD}) (minimum MSE over $h$ only)
and the MSE obtained by the natural empty cell weight $1/A$ for $\lambda=-0.5$
and different $\alpha$, $\theta$ and $n$; 
the pattern is similar for other $\lambda<0$ and hence not reported here.

We have not developed this analysis to the extent where we can make a definite recommendation 
about the value of the $\beta$ parameter to use in a given situation. 
However Figure \ref{FIG:new_beta} gives ample evidence of the fact that 
there is the possibility of further improving the small sample performance of the MSDEs, 
and it may be worthwhile to further pursue the role of the $\beta$ parameter. 

As a final word we point out that as all modifications involving the $h$ and $\beta$ parameters relate to the inliers, 
the improvement that is obtained in either case is achieved without compromising the outlier stability properties of the divergence. 
This is not just a technical observation, we have noticed this repeatedly in our simulations. 
However, we have not actually put up such tables in the paper that illustrate the robust behavior of the MSPDEs, 
as our interest here is on improving small sample model efficiency, 
rather than exploring the robustness of the estimators.

%
%

\end{document}